\theoremstyle{plain}
\newtheorem{proposition}{Proposition}
\newtheorem{theorem}{Theorem}
\theoremstyle{definition}
\newtheorem{definition}{Definition}
\theoremstyle{remark}
\newtheorem{remark}{Remark}
\newtheorem{example}{Example}
\DeclareMathOperator{\Ker}{ker}
\DeclareMathOperator{\Expectation}{\mathbb E}
\DeclareMathOperator{\Supp}{Supp}
\newcommand{\aval}[1]{\left\vert#1\right\vert}
\newcommand{\expectof}[1]{\Expectation\left(#1\right)}
\newcommand{\integers}{\mathbb Z}
\newcommand{\jointof}[2]{\mathcal P \left(#1,#2\right)}
\newcommand{\moves}{\mathcal M}
\newcommand{\one}{\bm 1}
\newcommand{\reals}{\mathbb{R}}
\newcommand{\scalarof}[2]{\left\langle#1,#2\right\rangle}
\newcommand{\setof}[2]{\left\{#1 \,\middle|\, #2 \right\}}
\newcommand{\set}[1]{\left\{#1\right\}}
\newcommand{\suppof}[1]{\Supp\left(#1\right)}
\begin{document}

\title{Finite space Kantorovich problem with an MCMC of table moves}

\author{Giovanni Pistone}
\address{Collegio Carlo Alberto}
\email{giovanni.pistone@carloalberto.org}
\author{Fabio Rapallo}
\address{Dipartimento DIEC, Universit\`a di Genova}
\email{fabio.rapallo@unige.it}
\author{Maria Piera Rogantin}
\address{Dipartimento di Matematica, Universit\`a di Genova}
\email{rogantin@dima.unige.it}


\begin{abstract}
In Optimal Transport (OT) on a finite metric space, one defines a distance on the probability simplex that extends the distance on the ground space. The distance is the value of a Linear Programming (LP) problem on the set of non-negative-valued 2-way tables with assigned probability functions as margins. We apply to this case the methodology of moves from Algebraic Statistics (AS) and use it to derive a Monte Carlo Markov Chain (MCMC) solution algorithm.
\end{abstract}

\keywords{Algebraic Statistics, Markov bases, Optimal Transport, Simulated Annealing.}

\maketitle


\section{Introduction}

In the present paper, we aim to show a connection between Optimal Transport (OT) and Algebraic Statistics (AS).

Modern OT was started by Kantorovich in 1939 and a new wave of development was initiated by Villani  \cite{villani:2003-topics}. In the present paper we use also an earlier result obtained by Gini \cite{gini:1914dissomiglianza}. A (finite) sample space $X$ and a cost function $c  \colon  X \times X \rightarrow {\mathbb R}$ are given. The set of joint probability functions $\gamma$ on $X \times X$ with given margins $\mu$ and $\nu$ is called the set of couplings, $\gamma \in \jointof \mu \nu$. In OT, one looks for an element that minimizes the expected value $c(\gamma) = \sum_{x,y \in X} c(x,y) \gamma(x,y)$. There is a rich general theory, see, for example, the textbook by Santambrogio \cite{santambrogio:2015OTAP}, but here we restrict our attention to the finite state space case.

AS was started by the paper Diaconis and Sturmfels \cite{diaconis|sturmfels:98} and by the book Pistone, Riccomagno, and Wynn \cite{pistone|riccomagno|wynn:2001}. In particular, the first paper deals with an algebraic method for constructing an irreducible random walk on the space of multi-way contingency tables with given margins. Each step of the random walk is associated with a move, that is, a table with zero margins, that subtracted to an initial table, produces a new table with the same margins. Basic results on contingency tables are to be found in Fienberg \cite{fienberg:80}.

We extend this idea to general tables, that is, tables not restricted to be integer-valued, and apply it to OT on a finite state space. To this aim, we provide a detailed study of the geometry of moves with continuous values. This paper considers both topics in computational algebra and in computational statistics. As an application, we define an MCMC algorithm for the computation of the optimal value and the optimal coupling in the case of a discrete sample space. Many special algorithms have been developed, see a general overview in Peyr\'e and Cuturi \cite{peyre|cuturi:2019}. Our algorithm is intended to be an alternative proposal.

The paper is organised as follows. In \Cref{sec:distance} we review the generalities and discuss the algebra of moves, considering both the linear algebra and the group algebra of moves. The Kantorovich problem is a special Linear Programming (LP) problem that we outline both as a primal and as a dual problem. In \Cref{sec:gini} we prove that a class of basic moves connects all couplings. The results are generalized to the tri-variate case in \Cref{sec:three}. Based on that theory, in \Cref{sec:algo} we provide
a MCMC algorithm to compute solutions of the minimal cost problem.

\section{Tables, cost, moves}\label{sec:distance}

Let $X$ be a set with $n$ points and let $\Delta(X)$ be its probability simplex.

Given probability functions $\mu,\nu \in \Delta(X)$, the joint probability function $\gamma \in \Delta(X \times X)$ is a \emph{coupling} (also called \emph{transport plan}) of $(\mu,\nu)$, if $\mu$ and $\nu$ are the two margins of $\gamma$. The set of all couplings $\mathcal P(\mu,\nu)$ is the polyhedron defined by the intersection of $\Delta(X \times X)$ with the $2n$ affine hyperplanes
\begin{equation}\label{eq:margins}
 \sum_{y\in X} \gamma(x,y) = \mu(x) \ , \quad  \sum_{x\in X} \gamma(x,y) = \nu(y) \ , \qquad x,y \in X \ .
\end{equation}
The number of independent constraints is $2n -1$ and the dimension of the polyhedron is $(n-1)^2$. This polyhedron is bounded, then it is a polytope. See the relevant convexity theory in \cite[Ch.~I-II]{barvinok:2002}.

As we are dealing with functions defined on points in a product space, $\gamma(x,y) \in \reals_+$, $(x,y) \in X \times X$, we consider the following definition. See the relevant graph theory in \cite{bollobas:1998}.

\begin{definition}
The \emph{support} of the  coupling $\gamma$ is
\begin{equation*}
    \suppof \gamma = \setof{(x,y)}{\gamma(x,y) > 0} \ .
\end{equation*}
It is identified with a directed, possibly non-simple, graph with vertex set $X$ and edge set $\suppof \gamma$. By abuse of language, the graph itself is the support of $\gamma$.
\end{definition}

If we add weights $\gamma$ to the graph $\suppof \gamma$, we obtain a weighted graph. Vertices of the coupling polytope are characteristic in that they have a small support.

\begin{proposition}\label{prop:couplings-polytope}
If $\widetilde \gamma$ is a vertex of the coupling polytope $\mathcal P(\mu,\nu)$, then its support $\suppof {\widetilde \gamma}$ has at most $(2n-1)$ edges.
\end{proposition}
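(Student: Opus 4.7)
My plan is to prove the contrapositive: if $\suppof{\widetilde\gamma}$ has at least $2n$ edges, then $\widetilde\gamma$ is not a vertex. The key observation is that while the support was defined above as a directed graph on vertex set $X$, for the purpose of the margin constraints it is more natural to view it as a bipartite graph between two disjoint copies $X_{\text{row}}$ and $X_{\text{col}}$ of $X$ (a row vertex for each $\mu$-constraint and a column vertex for each $\nu$-constraint), with an edge for each pair $(x,y) \in \suppof{\widetilde\gamma}$. This bipartite graph has $2n$ vertices, so if it has at least $2n$ edges it must contain a cycle.

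Given such a cycle, I will construct a nonzero move $m \colon X\times X \to \reals$ supported on its edges by assigning alternating values $+1$ and $-1$ to consecutive edges along the cycle (this is well-defined because the cycle alternates between $X_{\text{row}}$ and $X_{\text{col}}$, so its length is even). By construction, the signed sum of entries of $m$ incident to any fixed row vertex is $0$, and similarly for any column vertex, so $m$ has zero margins. Because every edge of the cycle belongs to $\suppof{\widetilde\gamma}$, the entry $\widetilde\gamma(x,y)$ is strictly positive on the support of $m$, and we can choose $\varepsilon > 0$ small enough so that both $\widetilde\gamma + \varepsilon m$ and $\widetilde\gamma - \varepsilon m$ have nonnegative entries; they then remain in $\mathcal P(\mu,\nu)$ by the zero-margin property of $m$. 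This presents $\widetilde\gamma$ as the midpoint of two distinct couplings, contradicting the assumption that $\widetilde\gamma$ is a vertex.

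An alternative route, which I would mention as a sanity check, is the dimension count: the coupling polytope sits in $\reals^{n \times n}$ and is cut out by $2n-1$ independent equalities and the $n^2$ sign inequalities $\gamma(x,y)\ge 0$; at a vertex, at least $n^2 - (2n-1) = (n-1)^2$ of the inequalities are active, giving at most $2n-1$ strictly positive entries. Both approaches reach the same bound, but the bipartite-graph cycle argument has the advantage of producing an explicit move, which fits the theme of this paper and foreshadows the move-based MCMC construction used later.

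The only mildly delicate point is making the alternating $\pm 1$ assignment unambiguous when the cycle traverses an edge in a particular direction; since the cycle in the bipartite graph is automatically a sequence of edges alternating between row-to-column and column-to-row traversals, the signs on the two orientations coincide and the construction is consistent. Beyond this bookkeeping, every step is routine.
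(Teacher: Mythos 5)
Your proof is correct, and it is essentially the argument the paper points to: the paper does not write out a proof but cites the bipartite-graph representation in Peyr\'e and Cuturi, which is exactly your cycle-plus-alternating-move perturbation. You have simply supplied the details of that cited proof (and your dimension-count sanity check matches the paper's own remark following the proposition).
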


\begin{proof}
This theorem is due to Brualdi \cite{brualdi:2006}. See a proof based on the representation of the support as a bipartite graph in \cite[\S 3.4]{peyre|cuturi:2019}.
\end{proof}

As $2n-1 = n + (n-1)$, the condition in the proposition above could be realized by a graph that has $n$ loops $x \to x$, $x \in X$, and other edges to form a tree. This is not always the case, as the \Cref{ex:2x2} below shows.

Notice that, for a vertex $\widetilde \gamma$, the $2n$ marginalization equations  in \Cref{eq:margins} have $2n-1$ non-zero unknowns $\gamma(x,y)$, $(x,y) \in \suppof {\widetilde \gamma}$, so that an extremal coupling is uniquely determined by its support.

\begin{figure}
  \begin{tabular}{ccc}
    \begin{tikzpicture}[scale=2.8,baseline = (current bounding box.center)]
\node (n00) at (1,-.5) {$\delta_{11}$};
\node (n10) at (0,0) {$\delta_{21}$};
\node (n01) at (2,0) {$\delta_{12}$};
\node (n11) at (1,1.5) {$\delta_{22}$};
\node (K1) at (5/6,-1/12) {};
\node (gK1) at (5/6-1/12,-1/12) {$\gamma_1$};
\node (K2) at (5/6,1/4) {};
\node (gK2) at (5/6-1/12,1/4) {$\gamma_2$};
\draw[dashed] (n01) -- (n10);
\filldraw [red] (5/6,1/4) circle (.5pt);
\draw [red] (5/6,-1/12) circle (.5pt);
\foreach \from/\to in {n00/n10,n00/n01,n11/n10,n11/n01,n00/n11}
\draw[thick] (\from) -- (\to);
\draw [thick,red,dashed] (K1) -- (K2); 
\end{tikzpicture} & $\longrightarrow$ &
    \begin{tikzpicture}[scale=3,baseline = (current bounding box.center)]
\node (n00) at (0,0) {$\delta_1\otimes\delta_1$};
\node (n10) at (1,0) {$\delta_2\otimes\delta_1$};
\node (n01) at (0,1) {$\delta_1\otimes\delta_2$};
\node (n11) at (1,1) {$\delta_2\otimes\delta_2$};
\foreach \from/\to in {n00/n01,n01/n11,n11/n10,n10/n00}
\draw[thick] (\from) -- (\to);
\filldraw [red] (1/2,2/3) circle (.5pt);
\end{tikzpicture}
  \end{tabular}
  \caption{See \cref{ex:2x2}. The arrow is the marginalization function of the probability simplex $\Delta(\set{1,2}^2)$ to the product of the two marginal simplexes $\Delta(\set{1,2}) \times \Delta(\set{1,2})$. Each vertex of the left simplex is mapped to a vertex of the right polytope, $\delta_{ij} \mapsto \delta_i \otimes \delta_j$. The dashed segment from $\gamma_1$ to $\gamma_2$ represents the coupling polytope of the margins represented by the circle in the right polytope. Notice that $\gamma_1$ belongs to the facet opposite to $\delta_{22}$, while $\gamma_2$ belongs to the facet opposite to $\delta_{12}$.}\label{fig:2x2}
  \end{figure}

\begin{example}\label{ex:2x2}
Let us consider $X = \set{1,2}$. The probability simplex $\Delta(X \times X)$ is the 3-simplex of Figure \ref{fig:2x2}. The dashed segment represents the set of couplings $\mathcal P((1/2,1/2),(2/3,1/3))$. The two end-points are
\begin{equation*}
\gamma_1 =
    \begin{pmatrix}
      1/6 & 1/3 \\ 1/2 & 0
    \end{pmatrix} \ , \quad \gamma_2 =
\begin{pmatrix}
  1/2 & 0 \\ 1/6 & 1/3
\end{pmatrix} \ .
\end{equation*}
The supports of $\gamma_1$ and $\gamma_2$ have $2\cdot2-1 = 3$ arcs. The support of $\gamma_2$ is a looped tree, while the support of $\gamma_1$ is not because of the cycle  $1 \rightleftarrows 2$.
The support of each non-vertex coupling $\gamma = (1-\lambda)\gamma_1 + \lambda \gamma_2$, $0 < \lambda < 1$, has $4$ arcs.
\end{example}

\medskip

The notion of couplings has a related setup in the context of the study of integer-valued tables with given margins. Given a table $T = [n(i,j)]_{i,j=1}^n \in \integers_+^{n \times n}$, the grand total is $n(+,+) = \sum_{i,j=1}^n n(i,j)$ and the margins are $n(\cdot,+)$, $n(+,\cdot)$. The corresponding probability function is defined by $\gamma(i,j) = n(i,j)/N$, with $i,j \in \{1, \ldots, n\}$. Conversely, if $\gamma \in \Delta(X \times X)$ has rational values, it comes from a table. See the extensive treatments in \cite{fienberg:80} and \cite{sullivan:2018book}.

Let $c \colon X \times X \to \reals_+$ be a non-negative valued function to be interpreted as the cost. The \emph{cost} of a coupling $\gamma$ (c-cost) is
\begin{equation}\label{eq:c-cost}
c(\gamma) = \sum_{x,y \in X} c(x,y)\gamma(x,y) \, .
\end{equation}
We are interested in minimizing the expected cost over the polytope of couplings. The \emph{Kantorovich cost} (K-cost) is
\begin{equation}\label{eq:K-cost}
K_c(\mu,\nu) = \inf \setof{c(\gamma)}{\gamma \in \jointof \mu \nu} \ .
\end{equation}

Especially, when the cost is a distance $d$, the minimum cost defines a distance on the simplex $\Delta(X)$, the \emph{Kantorovich distance} (K-distance), namely,
\begin{equation}\label{eq:K-distance}
  d(\mu,\nu) = \inf \setof{\sum_{x,y \in X} d(x,y) \gamma(x,y)}{\gamma \in \mathcal P(\mu,\nu)} \ .
\end{equation}
The distance case is considered in detail in \cite{montrucchio|pistone:2019-arXiv:1905.07547v5}.

As the simplex is a compact set, the optimal value is always obtained at some optimal coupling.

In the case of equality of the two margins $\mu = \nu$, the distance is zero because there is a coupling whose support consists of loops only, where $d(x,x)=0$. When the coupling is defined by the independence, $\gamma = \mu \otimes \mu$, the Kantorovich value is a Gini index of dispersion of $\mu$, see the monograph by Yitzhaki and Schechtman \cite{yitzhaki|schechtman:2013}.

The Kantorovich problem defined above is a special LP problem, in that we want to find the minimum of a linear function subject to equality and inequality constraints. It follows immediately from the definition that there exists a face of $\mathcal P(\mu,\nu)$ whose elements $\widetilde \gamma$ are optimal, that is, $c(\widetilde \gamma)=K_c(\mu,\nu)$ or, in the distance case,
$d(\mu,\nu) = \sum_{x,y} d(x,y) \widetilde \gamma(x,y)$. Generically, the set of solutions will be a vertex of the coupling polytope, hence subject to the support constraints of \Cref{prop:couplings-polytope}.

Let us discuss an equivalent form of the Kantorovich problem.

The marginalization operator is
\begin{equation*}
\Pi \colon   \reals^{X \times X} \ni f \mapsto \left(\sum_y
  f(\cdot,y),\sum_x f(x,\cdot)\right) \in \reals^X\oplus \reals^X \ ,
\end{equation*}
and $\Ker \Pi$ is the set of all functions  $f \colon X \times X \to
\reals$ whose margins are zero. It follows that
\begin{equation*}
  \mathcal P(\mu,\nu) = \setof{\mu \otimes \nu - f}{f \in \Ker \Pi,
    \mu \otimes \nu \geq f} \ ,
\end{equation*}
so that
\begin{equation*}
   K_c(\mu,\nu) = \sum_{x,y} c(x,y)\mu(x)\nu(y)  - \sup\setof{\sum_{x,y}
     c(x,y)f(x,y)}{f \in \Ker \Pi, \mu \otimes \nu \geq f} .
 \end{equation*}

 Let us show that the convex set
 \begin{equation*}\mathcal A = \setof{f}{f \in \Ker \Pi, \mu \otimes
     \nu \geq f}
 \end{equation*}
 is, in fact, a compact convex set. In fact, for each $f \in \mathcal
 A$ and all $(x,y)$, it holds
 \begin{equation*}
   f(x,y) = - \sum_{u \neq y} f(x,u) \geq - \sum_{u \neq y}
   \mu(x)\nu(u) = \mu(x)\nu(y) - \mu(x) \geq - \mu(x) \ .
 \end{equation*}
The same argument applies to the other variable, so that $f(x,y) \geq -
(\mu(x) \wedge \nu(y))$. In conclusion,
\begin{equation*}
\mathcal A = \Ker \Pi \cap \setof{f}{\mu(x)\nu(y) \geq f(x,y)
  \geq - \mu(x)\wedge\nu(y), x,y \in X} \ .
\end{equation*}

In turn, this allows to give a proof of the following continuity result.

\begin{proposition}
\label{prop:continuity}
The mapping $(\mu,\nu) \mapsto K_c(\mu,\nu)$ is continuous in the topology of $\reals^X \oplus \reals^X$.
\end{proposition}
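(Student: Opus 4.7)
The plan is to exploit the representation
\begin{equation*}
K_c(\mu,\nu) = \sum_{x,y} c(x,y)\mu(x)\nu(y) - \sup_{f \in \mathcal A(\mu,\nu)} \sum_{x,y} c(x,y) f(x,y)
\end{equation*}
derived just above, in which the set $\mathcal A(\mu,\nu) = \Ker \Pi \cap \{f : -\mu(x)\wedge\nu(y) \leq f(x,y) \leq \mu(x)\nu(y)\}$ has already been shown to be compact. The first summand is polynomial in the coordinates of $(\mu,\nu)$, hence continuous, so the task reduces to showing continuity of the value function $V(\mu,\nu) = \sup_{f\in\mathcal A(\mu,\nu)}\sum_{x,y} c(x,y)f(x,y)$ of the parametric linear program. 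I would obtain this via Berge's maximum theorem: the objective is a fixed continuous linear functional of $f$, independent of the parameter, so continuity of $V$ reduces to continuity of the constraint correspondence $(\mu,\nu)\mapsto\mathcal A(\mu,\nu)$.

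Upper hemicontinuity is immediate: if $f_n \in \mathcal A(\mu_n,\nu_n)$ with $(\mu_n,\nu_n)\to(\mu,\nu)$ and $f_n \to f$, both the linear kernel condition $\Pi f_n = 0$ and the pointwise box constraints pass to the limit. The crux, and the main obstacle, is lower hemicontinuity: given $f \in \mathcal A(\mu,\nu)$, I must produce $f_n \in \mathcal A(\mu_n,\nu_n)$ with $f_n \to f$. Translating via $\gamma = \mu\otimes\nu - f$, this amounts to: for any $\gamma \in \mathcal P(\mu,\nu)$ build $\gamma_n \in \mathcal P(\mu_n,\nu_n)$ with $\gamma_n \to \gamma$. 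My construction disintegrates $\gamma(x,y) = \mu(x) k(y|x)$ on the support of $\mu$ (reducing to that case first by restricting the ground space), sets $\widetilde\gamma_n(x,y) = \mu_n(x) k(y|x)$ so that the first margin is already $\mu_n$ and the second margin $\widetilde\nu_n$ converges to $\nu$, and then adds the rank-one correction $\mu_n \otimes (\nu_n - \widetilde\nu_n)$ which restores the second margin exactly while having norm $O(\|\mu_n-\mu\|_1 + \|\nu_n-\nu\|_1)$. Non-negativity for large $n$ is secured, if necessary, by first blending $\gamma$ with the independence coupling $\mu\otimes\nu$ using a weight that vanishes slowly enough to dominate the correction.

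An alternative argument sidesteps Berge's theorem for the easy direction: lower semicontinuity $K_c(\mu,\nu) \leq \liminf_n K_c(\mu_n,\nu_n)$ follows directly by extracting a convergent subsequence from optimal $\gamma_n \in \mathcal P(\mu_n,\nu_n)$, using compactness of $\Delta(X\times X)$, and applying continuity of the marginalization map to place the limit in $\mathcal P(\mu,\nu)$. The reverse inequality then rests on exactly the coupling-perturbation lemma sketched above. Thus the only genuinely delicate step is the construction of nearby couplings when $\gamma$ lies near the relative boundary of $\mathcal P(\mu,\nu)$, which the blending trick handles.
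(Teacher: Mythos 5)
Your proposal is correct, and at the top level it follows the same route as the paper: write $K_c$ as a constant term minus the value of a parametric linear program over $\mathcal A(\mu,\nu)$, and apply Berge's Maximum Theorem, with upper hemicontinuity coming from compactness and closedness of the constraints. The difference lies entirely in the one step that carries the real content, lower hemicontinuity. The paper disposes of it in a single sentence, asserting that the sets $\mathcal A(\mu_n,\nu_n)$ are convex and contained in an $\epsilon$-neighborhood of $\mathcal A(\mu,\nu)$ --- a statement that, as phrased, is really the upper-hemicontinuity inclusion and does not by itself produce, for a given $f \in \mathcal A(\mu,\nu)$, approximating feasible points $f_n \in \mathcal A(\mu_n,\nu_n)$. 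Your disintegration-plus-rank-one-correction construction (with the blend toward $\mu\otimes\nu$ to keep non-negativity near the relative boundary) supplies exactly the missing ingredient, and it is correct: the correction $\mu_n \otimes (\nu_n - \widetilde\nu_n)$ restores the second margin without disturbing the first, its size is controlled by $\normat{1}{\mu_n-\mu} + \normat{1}{\nu_n-\nu}$, and the only delicate cells are those with $k(y\mid x)=0$ but $\nu(y)>0$, which the vanishing blend handles. Your alternative two-inequality argument (compactness for $\liminf$, the perturbation lemma for $\limsup$) is also sound and has the advantage of avoiding the hemicontinuity formalism altogether. In short: same skeleton as the paper, but you have actually proved the step the paper only sketches, and your version of that step is the one that should be believed.
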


\begin{proof}
This is an application of Berge's Maximum Theorem, see, for example, \cite[\S~17.5]{aliprantis|border:2006}. Here is a sketch of a   proof. As the function to optimize is continuous, one has to show that the mapping $(\mu,\nu) \mapsto \mathcal A(\mu,\nu)$ is both upper and lower hemicontinous, see the definitions in  \cite[\S~17.2]{aliprantis|border:2006}. In our case, upper
hemicontinuity follows from the compactness. Lower hemicontinuity is proved by considering a sequence $(\mu_n,\nu_n)$ converging to $(\mu,\nu)$ and noting that the elements of the sequence $\mathcal A(\mu_n,\nu_n)$ are convex and contained in an $\epsilon$-neighborhood of $\mathcal A(\mu,\nu)$.
\end{proof}

As the Kantorovich problem is an LP problem, the duality theory applies, see, for example,  \cite[\S~IV.8]{barvinok:2002}. \Cref{eq:c-cost,eq:K-cost} can be  written in primal standard form as
\begin{equation*}
  K_c(\mu,\nu) = \inf_\gamma \scalarof c \gamma \qquad \text{subject to} \qquad \Pi \gamma = (\mu,\nu) \ , \quad \gamma \geq 0 \ .
\end{equation*}

The equivalent dual standard form is
\begin{equation*}
 \sup_{(\phi,\psi)} \scalarof {(\mu,\nu)}{(\phi,\psi)} \qquad \text{subject to}
 \qquad \Pi^t (\phi,\psi) \leq c \ ,
\end{equation*}
that is,
\begin{equation}\label{eq:dual-LP}
  K_c(\mu,\nu) = \sup \setof {\sum_{z\in X} \phi(z) \mu(z) +
    \sum_{z\in X} \psi(z) \nu(z)}{\phi \oplus \psi \leq c} \ ,
\end{equation}
In fact, $\Pi^t(\phi_1,\phi_2) = \phi_1 \oplus \phi_2$ in the functional representation and  $= \phi_1 \one^t + \one \phi_2^t$ in the matrix representation.

In this paper, we restrict our attention to the primal problem. However, the dual problem is interesting in that the domain does not depend on $\mu$, $\nu$, but it depends on the cost $c$ only.

Let us observe that the feasibility domain $\set{\phi \oplus \psi}$ in the dual problem can be further restricted. For a full presentation of the following argument, see \cite[\S~1.6]{santambrogio:2015OTAP}. If
$\phi(x) + \psi(y) \leq c(x,y)$, then
$\phi_1(x) = \inf_y c(x,y) - \psi(y)$ has the following properties:
\begin{enumerate}[label=(\alph*)]
\item $\phi(x) \leq \phi_1(x)$;
\item $\phi_1(x) + \psi(y) \leq c(x,y)$;
\item For each distance $d$ on $X$, there is a constant $K$ depending of $d$ and $c$ only such that $\phi_1(z) - \phi_1(z') \leq K d(z,z')$.
\end{enumerate}

The same argument applies to $\psi$. In conclusion, the feasible domain can be restricted, without changing the maximum, to all pairs $(\phi,\psi)$ such that
\begin{equation}
  \label{eq:condition}
\phi(x) + \psi(y) \leq c(x,y) \ , \quad \phi(z) - \phi(z') \leq K d(z,z') \ , \quad \psi(z) - \psi(z') \leq K d(x,y) \ .
\end{equation}
In particular, the optimal pair satisfies all the conditions above.

When the cost $c$ is a distance (denoted, if any confusion could arise, by $d$), then the Kantorovich construction induces a distance on probability functions. Moreover, it is possible to define metric geodesics and hence, a proper geometry associated to the given distance. The following proposition provides the details. The   extension property is a key characteristic of the K-distance which is not shared by other statistical measures of divergence.

\begin{proposition}\label{prop:is-a-distance}
Assume that the cost function in \Cref{eq:K-distance} is a distance $d$.
  \begin{enumerate}
  \item The $K_d$ value is a distance that extends the ground distance, that is, the K-distance between two Dirac probability functions equals the distance between the respective supports.
      \item Given $\mu, \nu \in \Delta(X)$, the mixture curve $\mu(t) = (1-t)\mu + t \nu$, $0 \leq t \leq 1$, is a metric geodesic for the K-distance, that is,
        \begin{equation*}
         K_d(\mu(t),\mu(s)) = (t-s)
        K_d(\mu,\nu) \ , \quad  0 \leq s \leq t \leq 1 \ .
        \end{equation*}
        \item If $\widetilde \gamma$ is optimal for $d(\mu,\nu)$, then the coupling defined by
\begin{equation*}
    \widetilde \gamma(x,y;s,t) = \left[(1-t) \mu(x) + s \nu(y)\right] (x=y) + (t-s) \widetilde \gamma(x,y) \ ,
\end{equation*}
with $(x=y)=1$ if $x=y$, 0 otherwise, is optimal for $K_d(\mu(s),\mu(t))$.
\end{enumerate}
\end{proposition}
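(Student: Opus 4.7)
The plan is to prove the three parts in the order (1), (3), (2), since the explicit coupling of (3) supplies the upper bound that, combined with the triangle inequality from (1), yields the geodesic equality in (2).

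For part (1), I would verify each distance axiom in turn. Symmetry is immediate from the involution $\gamma(x,y)\mapsto\gamma(y,x)$, which maps $\mathcal P(\mu,\nu)$ bijectively onto $\mathcal P(\nu,\mu)$ and preserves cost because $d(x,y)=d(y,x)$. The identity $K_d(\mu,\mu)=0$ follows from the diagonal coupling $\gamma(x,y)=\mu(x)(x=y)$, which has cost zero; conversely, $K_d(\mu,\nu)=0$ forces any optimal $\gamma$ to be supported on the diagonal (since $d(x,y)>0$ off-diagonal), whence $\mu=\nu$ by marginalization. The extension property is immediate, since $\mathcal P(\delta_a,\delta_b)$ contains only $\delta_{(a,b)}$. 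The triangle inequality is the only genuine step, and I would prove it by the classical gluing construction: given optimal $\gamma_{12}\in\mathcal P(\mu,\nu)$ and $\gamma_{23}\in\mathcal P(\nu,\rho)$, define the three-way table $\Gamma(x,y,z)=\gamma_{12}(x,y)\gamma_{23}(y,z)/\nu(y)$ with the convention $0/0=0$, and let $\gamma_{13}$ be its $(x,z)$-marginal. Then $\gamma_{13}\in\mathcal P(\mu,\rho)$, and by $d(x,z)\leq d(x,y)+d(y,z)$ its cost is at most $K_d(\mu,\nu)+K_d(\nu,\rho)$. This gluing is the one piece of work beyond routine bookkeeping.

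For part (3), I would first check that $\widetilde\gamma(\cdot,\cdot;s,t)$ is a valid coupling: non-negativity is clear from $0\leq s\leq t\leq 1$, and summing out one index at a time gives the margins $\mu(s)$ and $\mu(t)$ by direct calculation using that $\widetilde\gamma$ has margins $\mu$ and $\nu$. The cost then splits into a diagonal part $\sum_x d(x,x)[(1-t)\mu(x)+s\nu(x)]=0$ and an off-diagonal part $(t-s)\sum_{x,y}d(x,y)\widetilde\gamma(x,y)=(t-s)K_d(\mu,\nu)$, so the constructed coupling has cost exactly $(t-s)K_d(\mu,\nu)$. This establishes the upper bound $K_d(\mu(s),\mu(t))\leq(t-s)K_d(\mu,\nu)$ and, once (2) is proved, will certify optimality.

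For part (2), the upper bound follows from (3). For the reverse inequality I would apply the triangle inequality from (1) between the endpoints $\mu=\mu(0)$, $\nu=\mu(1)$ and the intermediate points,
\begin{equation*}
K_d(\mu,\nu)\leq K_d(\mu,\mu(s))+K_d(\mu(s),\mu(t))+K_d(\mu(t),\nu),
\end{equation*}
and then estimate the first and third terms by the upper bound of (3) applied at the endpoints, $K_d(\mu,\mu(s))\leq sK_d(\mu,\nu)$ and $K_d(\mu(t),\nu)\leq(1-t)K_d(\mu,\nu)$. Rearranging yields $K_d(\mu(s),\mu(t))\geq(t-s)K_d(\mu,\nu)$, so equality holds and the coupling constructed in (3) is optimal. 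The main obstacle is the gluing argument behind the triangle inequality; everything else is either a marginalization check or a one-line cost computation.
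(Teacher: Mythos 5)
Your proposal is correct and follows essentially the same route as the paper: the triangle inequality via the gluing $\gamma(x,y)=\sum_z\gamma_1(x,z)\gamma_2(z,y)/\nu(z)$, the explicit coupling of part (3) to obtain the upper bounds $K_d(\mu(s),\mu(t))\leq(t-s)K_d(\mu,\nu)$ on each subinterval, and the triangle inequality between the three pieces to force equality. You are slightly more explicit than the paper on the routine axioms of part (1) (symmetry, non-degeneracy, the singleton $\mathcal P(\delta_a,\delta_b)$), but the substance is identical.
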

\begin{proof}
This proof is known from the quoted literature. We repeat it here for sake of completeness.

Given the existence of optimal couplings, we can write
    \begin{equation*}
      K_d(\mu,\xi) = \sum_{x,z} d(x,z)\gamma_1(x,z) \quad
      \text{and} \quad K_d(\xi,\nu) = \sum_{z,y} d(z,y)\gamma_2(z,y) \ .
    \end{equation*}
Moreover,
\begin{equation*}
  \gamma(x,y) = \sum_{\setof{z}{\xi(z) >  0}} \frac{\gamma_1(x,z)\gamma_2(z,y)}{\xi(z)}
\end{equation*}
defines a coupling $\gamma$ of $\mu$ and $\nu$ whose value is less than or equal to the sum of the two values. Notice that $d$ must be a distance because we want to use the triangle inequality to check the last statement.

The other two statements are proved together. First, one checks that $\widetilde \gamma(s,t)$ is indeed a coupling of $\mu(s) = (1-s)\mu + s \nu$ and $\mu(t) = (1-t)\mu + t \nu$, and its value is $(t-s)K_d(\mu,\nu)$. It follows that $K_d(\mu(0),\mu(s)) \leq s   K_d(\mu,\nu)$, $K_d(\mu(s),\mu(t)) \leq (t-s) K_d(\mu,\nu)$, and $K_d(\mu(t),\mu(1)) \leq (1-t) K_d(\mu,\nu)$. But none of the inequalities can be strict, because otherwise,
  \begin{multline*}
    K_d(\mu,\nu) \leq K_d(\mu(0),\mu(s)) + K_d(\mu(s),\mu(t)) +
    K_d(\mu(t),\mu(1)) < \\ (s + (t-s) + (1-t)) K_d(\mu,\nu) =
    K_d(\mu,\nu) \ .
  \end{multline*}
This concludes the proof.
\end{proof}

The previous proposition does not rule out the existence of multiple geodesics between two points.

We will take also advantage of the following definition from the algebraic theory of two-way contingency tables, see, for example, \cite{rapallo:2003-SJS} and \cite{aoki|hara|takemura:2012}. Remember that the affine space of the convex polytope $\mathcal P(\mu,\nu)$ is the vector space generated by the differences $\gamma_1-\gamma_2$, $\gamma_1,\gamma_2 \in \mathcal P(\mu,\nu)$. Clearly, the margins of the elements of the affine space are null.

\begin{definition} A \emph{move} is a real valued function $M$ defined on $X \times X$ and  with null margins, $\sum_x M(x,y) = \sum_y M(x,y) = 0$. An \emph{integer move} is an integer valued move. It is a \emph{simple move} if it  takes values in $\{-1,0,1\}$. It is a \emph{basic move} if it is of the form
  \begin{multline*}
\delta_{x_1}  \otimes \delta_{y_1} - \delta_{x_1}  \otimes \delta_{y_2} - \delta_{x_2}  \otimes \delta_{y_1} + \delta_{x_2}  \otimes \delta_{y_2} = \\
(\delta_{x_1}-\delta_{x_2})\otimes(\delta_{y_1}-\delta_{y_2}) \ , \quad x_1\neq x_2, y_1\neq y_2 \ .   \end{multline*}
\end{definition}

Throughout this paper, we write $\{M>0\}$ to denote the set of indices $\{(x,y) \ | \ M(x,y)>0\}$, and similarly for $\{M<0\}$.

Notice that there are $\binom n 2 ^2$ different basic moves up to the sign. They are not linearly independent. We prove below that, given a pivot point $(u,v)$, the $(n-1)^2$ basic moves of the type $(\delta_x - \delta_u) \otimes (\delta_y - \delta_v)$, with $x\ne u, y \ne v$, form a basis of the set of moves as vector space.

\begin{proposition}
The vector space $\moves(X \times X)$ of moves is the kernel of the marginalization mapping
\begin{equation*}
\Pi \colon  \reals^{n \times n} \ni A \mapsto (A\one,A^t\one) \in \left (\reals^n,\reals^n\right) \ .
\end{equation*}
The dimension of $\Ker \Pi$ is $(n-1)^2$. For each $u,v \in X$, the set of basic moves $(\delta_u - \delta_x)\otimes(\delta_v-\delta_y)$, $x,y \in X$, $x \neq u$ and $y \neq v$, is a basis of $\moves(X \times X)$. Moreover, it holds
\begin{equation}\label{eq:posbasisofM}
 M =\frac 1 { \# \set{M > 0}}\ \sum_{x,y} M(x,y) \sum_{u,v \colon M(u,v) > 0} (\delta_x-\delta_u)\otimes(\delta_y-\delta_v) \ .
\end{equation}
\end{proposition}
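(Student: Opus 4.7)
The plan is to dispatch the four claims in order: (i) the identification $\moves(X\times X) = \Ker\Pi$ is just a rewriting of the zero--margin conditions $\sum_y M(x,y)=0$ and $\sum_x M(x,y)=0$ as $A\one = 0$ and $A^t\one = 0$, so there is nothing to prove. (ii) For the dimension, I would apply rank--nullity. The image of $\Pi$ lies in the codimension--one subspace $\setof{(p,q)\in\reals^X\oplus\reals^X}{\sum_x p(x) = \sum_y q(y)}$ because both components sum to the grand total of $A$; conversely, given any admissible pair $(p,q)$ one explicitly exhibits a preimage (e.g. the rank-one matrix $p\otimes q / \sum_x p(x)$ when the total is nonzero, treating the $0$ case separately). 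Hence $\operatorname{rank}\Pi = 2n-1$ and $\dim\Ker\Pi = n^2 - (2n-1) = (n-1)^2$.

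For (iii), fix the pivot $(u,v)$ and consider the family $B_{x,y} = (\delta_u-\delta_x)\otimes(\delta_v-\delta_y)$ for $x\ne u$, $y\ne v$. Each $B_{x,y}$ is a move since $\one^t(\delta_u-\delta_x)=\one^t(\delta_v-\delta_y)=0$, and the family has cardinality $(n-1)^2$. Linear independence follows from reading off the matrix entries: at position $(x',y')$ with $x'\ne u$ and $y'\ne v$, the move $B_{x,y}$ takes the value $\delta_{(x,y)}(x',y')$. Therefore the restriction of $\set{B_{x,y}}_{x\ne u,\,y\ne v}$ to the $(n-1)^2$ coordinates indexed by $(X\setminus\{u\})\times(X\setminus\{v\})$ is the standard basis of $\reals^{(n-1)\times(n-1)}$. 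Combined with the dimension count, this yields a basis.

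For (iv), I would just expand the claimed identity. Write $N = \#\set{M>0}$ and $S = \setof{(u,v)}{M(u,v)>0}$, and distribute:
\begin{equation*}
\text{RHS} = \frac{1}{N}\sum_{x,y} M(x,y)\sum_{(u,v)\in S}\bigl[\delta_x\otimes\delta_y - \delta_x\otimes\delta_v - \delta_u\otimes\delta_y + \delta_u\otimes\delta_v\bigr].
\end{equation*}
The first summand gives $\frac{1}{N}\cdot N\sum_{x,y} M(x,y)\,\delta_x\otimes\delta_y = M$. The second summand factors as $\delta_x \otimes \bigl(\sum_{(u,v)\in S}\delta_v\bigr)$, and summing against $M(x,y)$ over $y$ first kills it because $\sum_y M(x,y)=0$; similarly the third summand vanishes after summing over $x$. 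The fourth summand is independent of $(x,y)$, so $\sum_{x,y}M(x,y)$ of it is zero by the grand total being $0$. Hence $\text{RHS}=M$.

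The only nontrivial bookkeeping is step (iv); steps (i)--(iii) are routine linear algebra. I expect the main (mild) obstacle to be writing the expansion cleanly enough that the three vanishing terms are visibly killed by the three margin identities ($\sum_y M(x,y)=0$, $\sum_x M(x,y)=0$, and $\sum_{x,y}M(x,y)=0$), respectively.
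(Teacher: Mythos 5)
Your proposal is correct, and its first two parts coincide with the paper's argument: the identification of $\moves(X\times X)$ with $\Ker\Pi$ is definitional, and the dimension follows from rank--nullity once the image of $\Pi$ is identified with the codimension-one subspace of pairs $(f,g)$ with $\sum_x f(x)=\sum_y g(y)$ (you are in fact slightly more careful than the paper about normalizing the counter-image $f\otimes g$). The genuine divergence is in the last two claims. For the basis, the paper proves \emph{spanning}: fixing the pivot $(u,v)$ and using $M(u,v)=\sum_{x\neq u,\,y\neq v}M(x,y)$, it derives the explicit expansion $M=\sum_{x\neq u,\,y\neq v}M(x,y)\,(\delta_x-\delta_u)\otimes(\delta_y-\delta_v)$, which together with the dimension count gives a basis; you instead prove \emph{linear independence} by restricting to the coordinates in $(X\setminus\{u\})\times(X\setminus\{v\})$, where the family becomes the standard basis, and then invoke the dimension count. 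Both are short and valid; yours avoids computation, while the paper's version additionally produces the coordinates of an arbitrary move in the pivotal basis. That difference propagates to the final formula: the paper obtains \eqref{eq:posbasisofM} by summing its expansion over all pivots $(u,v)$ with $M(u,v)>0$ and dividing by $\#\{M>0\}$ (the terms with $x=u$ or $y=v$ vanish, so the inner sum may run over all $x,y$), whereas you verify \eqref{eq:posbasisofM} from scratch by distributing the tensor product and killing the three cross terms with the row-margin, column-margin, and grand-total identities --- a correct and arguably cleaner self-contained check. The only thing your route does not deliver is the explicit coefficient formula $M(x,y)$ for the expansion of a move in the pivotal basis, which the paper uses immediately afterwards for the representation of simple moves; if you need that remark, you should record the paper's expansion as a corollary of your computation (it is the single-pivot case of your step (iv)).
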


\begin{proof}
Note first that the image of the marginalization mapping is a space of dimension $(2n-1)$, precisely $\setof{(f,g) \in \reals^{2n}}{\sum_x f(x) = \sum_y g(y)}$. In fact $\one^t A \one = \one^t A^t \one$, and, given any pair of margins $f$ and $g$ such that $\sum_x f(x) = \sum_y g(y)$, the outer product $f \otimes g$ is a counter-image. It follows that the dimension of the kernel is $n^2 - (2n-1) = (n-1)^2$.

Every basic move $(\delta_{u}-\delta_{x})\otimes(\delta_{v} - \delta_{y})$ is clearly an element of the kernel. Let us find a basis of $\moves$. Let $M \in \moves$ and fix $u,v \in X$. As $M(u,v) = - \sum_{x \neq u} M(x,v) = \sum_{x \neq u,y \neq v} M(x,y)$, with straightforward computations one obtains
  \begin{equation*}
    M = \sum_{x \neq u, y \neq v} M(x,y) (\delta_x - \delta_u) \otimes (\delta_y - \delta_v) \, .
  \end{equation*}
\Cref{eq:posbasisofM} now follows immediately adding over all $u,v$ such that $M(u,v) > 0$.
\end{proof}

We have shown that every move $M$ is a linear combination of the $(n-1)^2$ basic moves $(\delta_x - \delta_u) \otimes (\delta_y - \delta_v)$, $x \neq u$ and $y \neq v$. In particular, all other basic moves are combination of these special moves. More generally, if $M$ is a simple move,
\begin{equation*}
    M = \sum_{M(x,y) = +1} (\delta_x - \delta_u) \otimes (\delta_y - \delta_v) - \sum_{M(x,y) = -1} (\delta_x - \delta_u) \otimes (\delta_y - \delta_v) \ .
\end{equation*}

In spite of the $(n-1)^2$ pivotal moves around $(u,v)$ form a linear basis of the vector space of moves, we will need to use all basic moves in order to perform a connected random walk that stays in the polytope $\jointof \mu \nu$, see \cite{sullivan:2018book}.

\begin{proposition}
  The move $M$ is the difference of two couplings, $\gamma, \overline\gamma\in\jointof \mu \nu$ if, and only if, both hold
  \begin{equation*}
    \sum_y \aval{M(x,y)} \leq 2 \mu(x) \quad \text{and} \quad \sum_x \aval{M(x,y)} \leq 2\nu(y) \ ,
  \end{equation*}
for all $x,y \in X$.
\end{proposition}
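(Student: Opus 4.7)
The plan is to split the biconditional in the usual way. The forward direction (necessity) is immediate from the triangle inequality applied pointwise: if $M = \gamma - \overline\gamma$ with $\gamma,\overline\gamma \geq 0$, then $\aval{M(x,y)} \leq \gamma(x,y) + \overline\gamma(x,y)$, and summing over $y$ (respectively $x$) and using that both couplings have marginals $\mu$ and $\nu$ gives $\sum_y \aval{M(x,y)} \leq 2\mu(x)$ and $\sum_x \aval{M(x,y)} \leq 2\nu(y)$.

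For the reverse direction (sufficiency), the plan is to build the two couplings explicitly from the Jordan decomposition of $M$. Write $M = M^+ - M^-$ with $M^\pm \geq 0$ and disjoint supports, so that $\aval M = M^+ + M^-$. Because $M$ has null margins, the positive and negative parts share row and column totals: $a(x) := \sum_y M^+(x,y) = \sum_y M^-(x,y)$ and $b(y) := \sum_x M^+(x,y) = \sum_x M^-(x,y)$. The hypothesis rewrites as $2a(x) \leq 2\mu(x)$ and $2b(y) \leq 2\nu(y)$, giving $a \leq \mu$ and $b \leq \nu$ pointwise.

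I then need to add a common non-negative correction $\rho$ that restores the missing mass in both margins. Set $T = \sum_x a(x) = \sum_y b(y) \in [0,1]$; the deficits $\mu - a$ and $\nu - b$ are non-negative with common total mass $1-T$, so any non-negative $\rho$ with those two margins works, e.g.\ the scaled independence coupling
\begin{equation*}
\rho(x,y) = \frac{(\mu(x) - a(x))(\nu(y) - b(y))}{1-T}
\end{equation*}
when $T<1$, and $\rho \equiv 0$ when $T=1$. Then $\gamma := M^+ + \rho$ and $\overline\gamma := M^- + \rho$ are non-negative, have margins $\mu$ and $\nu$ by construction, and satisfy $\gamma - \overline\gamma = M$.

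The only mildly delicate point is to notice that one has to check the degenerate case $T=1$ separately (in which $M^+$ and $M^-$ are already couplings), but no real obstacle arises; the argument is essentially a bookkeeping of margins once the Jordan decomposition is in place.
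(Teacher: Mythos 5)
Your proof is correct and follows essentially the same route as the paper's: the triangle inequality for necessity, and for sufficiency the Jordan decomposition $M=M^+-M^-$ with the observation that the common row and column totals $a,b$ satisfy $a\leq\mu$, $b\leq\nu$, followed by adding a common non-negative table with margins $\mu-a$ and $\nu-b$. The only difference is that you exhibit that correction explicitly as a scaled outer product (and treat the $T=1$ case), whereas the paper merely asserts its existence.
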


\begin{proof}
  If $\gamma, \overline \gamma \in \jointof \mu \nu$, then $M = \gamma - \overline \gamma$ is a move such that
  \begin{gather*}
    \sum_{y} \aval{M(x,y)} = \sum_y \aval{\gamma(x,y) - \overline \gamma(x,y)} \leq  \sum_y \gamma(x,y) + \sum_y \overline \gamma(x,y) = 2\mu(x)\ , \\
    \sum_{x} \aval{M(x,y)} = \sum_x \aval{\gamma(x,y) - \overline \gamma(x,y)} \leq  \sum_x \gamma(x,y) + \sum_x \overline \gamma(x,y) = 2\nu(x) \ .
  \end{gather*}
  Conversely, assume $M$ is a move, decomposed in its positive and negative part, $M=M^+-M^-$, such that
    \begin{gather*}
    \sum_{y} \aval{M(x,y)} = \sum_y (M^+(x,y) + M^-(x,y)) \leq  2\mu(x)\ , \\
    \sum_{x} \aval{M(x,y)} = \sum_x (M^+(x,y) + M^-(x,y)) \leq  2\nu(y)\ .
  \end{gather*}
  As $\sum_y M^+(x,y) = \sum_y M^-(x,y)$ and  $\sum_x M^+(x,y) = \sum_x M^-(x,y)$, we have
  \begin{gather*}
    a(x) =  \sum_y M^+(x,y) = \sum_y M^-(x,y) \leq \mu(x) \\
    b(y) = \sum_x M^+(x,y) = \sum_x M^-(x,y) \leq \nu(y) \ .
  \end{gather*}
Notice that $\sum_x a(x) = \sum_y b(y) = h$, so that there exist a non-negative $M^* \colon X \times X \rightarrow \mathbb R$ whose margins are $(\mu-a)$ and $(\nu - b)$, respectively, and whose grand total is $1-h$.

The equations
  \begin{equation*}
    \gamma(x,y) = M^+(x,y) + M^*(x,y) \ , \quad
    \overline\gamma(x,y) = M^-(x,y) + M^*(x,y) \ ,
  \end{equation*}
  provide the required coupling.
\end{proof}

\begin{proposition}
Every move $M$ is of the form
\begin{equation*}
  M = \alpha_1 F_1 + \cdots + \alpha_k F_k \ ,
\end{equation*}
where $\alpha_1, \dots, \alpha_k > 0$ and $F_1,\dots,F_k$ are simple
moves. Moreover, it is possible to choose the basic moves in such a
way that, for the sequence of remainders $M_j = M - (\alpha_1 F_1 +
\cdots + \alpha_j F_j)$, $j = 1,\dots,k$, it holds
\begin{equation*}
  \set{M_{j-1} > 0} \supset \set{M_{j} > 0} \quad \text{and} \quad
  \set{M_{j-1} < 0} \supset \set{M_{j} < 0} \, .
\end{equation*}
\end{proposition}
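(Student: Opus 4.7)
The plan is a greedy peel-off. Assuming the following extraction lemma—that for every non-zero move $M$ there exists a simple move $F$ with $\set{F>0}\subseteq\set{M>0}$ and $\set{F<0}\subseteq\set{M<0}$—one proceeds inductively: set $M_0=M$ and, as long as $M_{j-1}\neq 0$, pick such an $F_j$ relative to $M_{j-1}$, put
\begin{equation*}
\alpha_j=\min\setof{\aval{M_{j-1}(x,y)}}{(x,y)\in\suppof{F_j}}>0,\qquad M_j=M_{j-1}-\alpha_j F_j.
\end{equation*}
At positions where $F_j=0$ the entry is unchanged; where $F_j=\pm 1$, the sign of $M_{j-1}$ is preserved (because $\aval{M_{j-1}(x,y)}\ge\alpha_j$) and at a minimizer the entry drops to zero. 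Hence the sign-supports are nested as claimed and the support of $M_j$ strictly shrinks, so the iteration terminates after finitely many steps.

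The crux is the extraction lemma. I would prove it via an auxiliary directed bipartite graph $D$ with vertex set $X\sqcup X$ (thought of as ``row copies'' and ``column copies'' of $X$): for each $(x,y)\in\set{M>0}$ include the arc $x\to y$, and for each $(x,y)\in\set{M<0}$ include the arc $y\to x$. Because $M$ has vanishing row sums, any row with a non-zero entry must carry both a positive and a negative entry, so every non-isolated row vertex has positive out-degree and positive in-degree in $D$; the same holds for every column vertex. In a finite digraph where every non-isolated vertex has positive in- and out-degree, one finds a simple directed cycle by starting at any non-isolated vertex and following outgoing arcs until a vertex repeats. Such a cycle visits rows and columns alternately, $x_1,y_1,x_2,y_2,\dots,x_m,y_m,x_1$, with each $x_i\to y_i$ witnessing a positive entry of $M$ and each $y_i\to x_{i+1}$ a negative one. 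Setting $F(x_i,y_i)=+1$, $F(x_{i+1},y_i)=-1$ (indices mod $m$), and $F\equiv 0$ elsewhere, simplicity of the cycle forces $F$ to be $\set{-1,0,1}$-valued, while the row and column sums vanish by construction; this is the required simple move.

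The main obstacle is precisely this cycle-finding step, because one needs the alternating-sign structure to produce a move, not merely a cycle of some sort: it is the zero-margin property of $M$—upgraded to a balance of in- and out-arcs at every non-isolated vertex of $D$—that guarantees both sign inclusions simultaneously. Everything else, including termination and the nested sign-support property, is bookkeeping around the maximal scaling $\alpha_j$.
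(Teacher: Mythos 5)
Your proof is correct and follows essentially the same route as the paper: extract an alternating cycle of positive and negative cells from the zero-margin structure, turn it into a simple move supported inside $\set{M>0}$ and $\set{M<0}$, and peel it off with the maximal coefficient $\alpha=\min\aval{M}$ over the cycle so that the support strictly shrinks. The only cosmetic difference is that you build the auxiliary digraph on row and column vertices with cells as arcs, whereas the paper takes the cells of $M_+\cup M_-$ as vertices joined by horizontal and vertical edges; the two encodings are dual and yield the same cycle.
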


\begin{proof} Let $M$ be a move and define the two sets of indices $M_+ = \set{M > 0}$, $M_- = \set{M < 0}$. Without restriction of generality, assume that the first projection of $M_+$ has $n$ points. Let us define a directed bipartite graph with vertices $M_+ \cup M_-$ as follows. For each $(x,y) \in
M_+$ there is a edge going to $(x,\bar y)$ if $(x,\bar y) \in M_-$. For each $(x,y) \in M_-$ there is an edge going to $(\bar x,y)$ if $(\bar x,y) \in M_+$. Edges of the first type are horizontal in the table, while edges of the second type are vertical. At least one edge of the first type always exists for each $x$ because the sum over that row is null. The same holds for each column $y$.

By construction, there are at least $2n$ edges in the graph and at most $2n$ vertices. Hence, there is at least one irreducible cycle with even length, say $2m$. Fix a starting point in $M_+$ and enumerate the vertices as
\begin{multline*}
(x_1,y_1) \to (x_1, \bar y_1) \to (\bar x_1, \bar y_1) = (x_2, y_2)
\to \cdots \\ (\bar x_{m-1}, \bar y_{m-1}) = (x_m,y_m) \to (x_m, \bar y_m) \to (\bar x_m, \bar y_m) = (x_1, y_1)  \ .
\end{multline*}

Let us construct a simple move from the cycle above. Observe that
\begin{equation*}
F = \sum_{j=1}^ m \delta_{x_j} \otimes \delta_{y_j} - \sum_{j=1}^ m \delta_{x_j} \otimes
\delta_{\bar y_j} =  \sum_{j=1}^ m \delta_{\bar x_{j-1}} \otimes
\delta_{\bar y_{j-1}} - \sum_{j=1}^m \delta_{x_j} \otimes
\delta_{\bar y_j} \ ,
\end{equation*}
where the indices in the second expression are computed $\mod m$. The first expression shows that the first margin is zero, while the second expression shows that the second margin is zero.

For each positive $\alpha$, the move $M' = M - \alpha F$ subtracts from the values in $M_+$ and adds to the values in $M_-$. If $\alpha = \min \aval{M}$, then the operation cancels at least one non-zero value of $M$. As a consequence, $\# \suppof {M'} < \# \suppof{M}$.

Now the proposition is proved by a finite number of applications of the previous step.
\end{proof}

We are interested in the characterisation of moves which are the difference of two coupling, where the first one is fixed.

\begin{definition}
A move $M$ is \emph{admissible} for the coupling $\gamma \in \mathcal P(\mu,\nu)$ if $\overline \gamma = \gamma - \alpha M \geq 0$ for some $\alpha>0$, that is, $\overline \gamma = \gamma - \alpha M \in \jointof \mu \nu$. In other words, a move is admissible for $\gamma$, if, and only if, $\set{M > 0} \subset \suppof \gamma$.
\end{definition}

The couplings $\gamma$ and $\overline \gamma$ are related to each other through $M$ and $\alpha$. In particular, the cost of $\overline \gamma$ depends on $\alpha$, on the cost of $\gamma$, and on the cost of $M$. We are especially interested in $M$ being a simple move. In such
a case,
\begin{equation*}
  c(\overline \gamma) =  c(\gamma) - \alpha \left(\sum_{\{M= +1\}} c(x,y)
- \sum_{\{M = -1\}} c(x,y)\right) \ ,
\end{equation*}
so that the value $c(\overline \gamma)<c(\gamma)$ if, and only if,
\begin{equation*} 
  \sum_{\{M = +1\}} c(x,y) > \sum_{\{M = -1\}} c(x,y) \ .
\end{equation*}

Now, this property can be restated in a more specific form.

\begin{proposition}\label{prop:permutations} Let $M$ be a simple move and let $(x_i,y_i)$, $i =
  1,\dots,k$, be any sequence of $\set{M=+1}$. It holds
\begin{equation}\label{eq:move-permutation}
  M = \sum_{i=1}^k \delta_{x_i}\otimes\delta_{y_i} - \sum_{i=1}^k \delta_{x_i}\otimes\delta_{y_{\sigma(i)}} \ ,
\end{equation}
for a permutation $\sigma \in S_k$.
\end{proposition}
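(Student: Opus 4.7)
The plan is to show that an enumeration $(x_i,z_i)$, $i=1,\dots,k$, of $\set{M=-1}$ can be chosen so that (i) the first coordinate $x_i$ agrees with the first coordinate of the $i$-th element of the given enumeration of $\set{M=+1}$, and (ii) the resulting sequence $(z_i)_{i=1}^k$ is a permutation of $(y_i)_{i=1}^k$. These two facts together give exactly \Cref{eq:move-permutation}, with $\sigma$ the permutation determined in (ii).

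For (i), I would use the row-sum condition $\sum_y M(x,y)=0$. Since $M$ takes values only in $\set{-1,0,+1}$, this forces, for each row $x$, the number of $+1$ entries in row $x$ to equal the number of $-1$ entries in row $x$. Choosing, row by row, an arbitrary bijection between the positive and negative positions of that row, one assembles a bijection $\phi\colon\set{M=+1}\to\set{M=-1}$ that preserves the first coordinate. Setting $\phi(x_i,y_i)=(x_i,z_i)$ produces the enumeration of $\set{M=-1}$ required in (i).

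For (ii), I would invoke the column-sum condition $\sum_x M(x,y)=0$. Again by the simplicity of $M$, for each column $y$ the number of rows in which it carries a $+1$ equals the number of rows in which it carries a $-1$. Consequently the multiset of second coordinates of $\set{M=+1}$, namely $\set{y_1,\dots,y_k}$, coincides with the multiset of second coordinates of $\set{M=-1}$, namely $\set{z_1,\dots,z_k}$. This multiset identity yields a permutation $\sigma\in S_k$ with $z_i=y_{\sigma(i)}$, and therefore $\set{M=-1}=\setof{(x_i,y_{\sigma(i)})}{i=1,\dots,k}$, finishing the proof.

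The argument is purely combinatorial and uses only the two defining identities of a move together with the $\set{-1,0,+1}$ range; no extremal or cycle-based analysis from the previous proposition is needed. The one point to handle carefully is the distinction between set-wise and multiset-wise equality, since a column label $y$ may well appear several times in $(y_1,\dots,y_k)$ (once for each row where $M(\cdot,y)=+1$), and the same is true for $(z_1,\dots,z_k)$. The column-sum condition is precisely what guarantees that the two multiplicities agree column by column, which is what legitimises passing from a bijection $\phi$ to the permutation $\sigma$.
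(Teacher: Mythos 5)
Your proof is correct and follows essentially the same route as the paper: both arguments rest on the observation that zero margins together with values in $\set{-1,0,1}$ force the positive and negative supports to have equal counts in every row and every column, from which the permutation is extracted. Your version merely organizes this slightly more explicitly (fixing the row-preserving bijection first, then reading off $\sigma$ from the column counts), whereas the paper obtains two permutations $\sigma',\sigma''$ from the two margins and composes them.
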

\begin{proof}
Clearly, the two sets $\set{M = +1}$ and $\set{M = -1}$ have the same number of points. Let $(\overline x_j,\overline y_j)$, $j=1,\dots,k$, be an arbitrary sequencing of the second one. The move is
\begin{equation*}
\sum_{i=1}^k \delta_{x_i}\otimes\delta_{y_i} - \sum_{j=1}^k \delta_{\overline x_j}\otimes\delta_{\overline y_{j}} \ .
\end{equation*}
The first margin is
\begin{equation*}
  \sum_y F(x,y) = \sum_{i=1}^k \delta_{x_i} - \sum_{j=1}^k \delta_{\overline x_j} = 0 \ .
\end{equation*}
It follows that $\overline x_j = x_{\sigma'(i)}$ for some permutation $\sigma'\in S_k$. Considering the second margin, we find $\overline y_j = y_{\sigma''(j)}$ for some permutation $\sigma''\in S_k$. Now the required identity follows by taking $\sigma=\sigma''\sigma'^{-1}$.
\end{proof}

From \Cref{eq:move-permutation}, it follows that the c-cost of a simple move $M$ can be written as
\begin{equation}\label{eq:Kofsimple}
  c(M) = \sum_{i=1}^k c(x_i,y_i) - \sum_{i=1}^k c(x_i,y_{\sigma(i)}) \ .
\end{equation}

The condition in \Cref{eq:Kofsimple} appears in the literature under
the name given in the following definition. This name is due to Rockafellar \cite[\S 24]{rockafellar:1970}, who considered a similar property as a condition for a multi-mapping to be the sub-differential of a convex function.

\begin{definition}
A set of directed edges $G \subset X \times X$ is said to be \emph{cyclically monotone} for the cost $c$ if for each sequence $(x_i,y_j)_{i=1}^k$ in $G$, and each permutation $\sigma \in S_k$, it holds
\begin{equation} \label{eq:monotonicity}
  \sum_{i=1}^k c(x_i,y_i) \leq \sum_{i=1}^k c(x_i,y_{\sigma(i)}) \ .
\end{equation}
\end{definition}

The cyclical monotonicity for the cost $c$ of $\suppof \gamma$ is a
known sufficient and necessary condition for the optimality of $\gamma$
in the corresponding Kantorovich  problem. It is the so-called
Fundamental Theorem of Optimal Transport, see, for example,
\cite[\S~1.6]{santambrogio:2015OTAP}. Here, we want to discuss the
same topic in the algebraic language of moves by using the following
simple equivalence.

\begin{proposition}
A set $G \subset X \times X$ is c-cyclically monotone if, and only if, each simple move $M$ such that $\set{M>0} \in G$ has non-positive value.
\end{proposition}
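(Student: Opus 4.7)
My plan is to prove both implications by reading \Cref{prop:permutations} and formula \eqref{eq:Kofsimple} as a dictionary between simple moves supported in $G$ and the (sequence, permutation) pairs appearing in the definition of cyclical monotonicity.

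For the forward direction, assume $G$ is $c$-cyclically monotone and let $M$ be a simple move with $\set{M>0}\subset G$. By \Cref{prop:permutations}, I can enumerate $\set{M=+1}=\set{(x_i,y_i) : i=1,\dots,k}\subset G$ and find a permutation $\sigma\in S_k$ so that $\set{M=-1}=\set{(x_i,y_{\sigma(i)}) : i=1,\dots,k}$. Formula \eqref{eq:Kofsimple} then gives $c(M)=\sum_i c(x_i,y_i)-\sum_i c(x_i,y_{\sigma(i)})$, which is nonpositive by \eqref{eq:monotonicity} applied to the sequence $(x_i,y_i)$ in $G$ and the permutation $\sigma$.

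For the converse, suppose every simple move with positive support in $G$ has nonpositive cost, and fix a sequence $(x_i,y_i)_{i=1}^k$ in $G$ together with a permutation $\sigma\in S_k$. I form $M=\sum_{i=1}^k\left(\delta_{x_i}\otimes\delta_{y_i}-\delta_{x_i}\otimes\delta_{y_{\sigma(i)}}\right)$, which always has null margins (the first by inspection, the second because $\sigma$ permutes the indices of the $y_i$). To invoke the hypothesis, I need $M$ to be simple and to satisfy $\set{M>0}\subset G$. I decompose $\sigma$ into disjoint cycles, which splits both the target inequality and $M$ into independent contributions and so reduces the proof to the single-cycle case. Within one cycle, a minimality argument on $k$ forces the pairs $(x_i,y_i)$ to be distinct; any coincidence between a positive and a negative atom then cancels exactly to zero, and $M$ becomes a genuine simple move with $\set{M>0}$ among the $(x_i,y_i)$, hence contained in $G$. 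The hypothesis delivers $c(M)\le 0$, which by \eqref{eq:Kofsimple} is exactly \eqref{eq:monotonicity} for that cycle; summing over cycles recovers the inequality for $\sigma$.

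The delicate point is this degeneracy handling in the converse: a naively chosen sequence and permutation produce an $M$ whose coefficients can exceed $1$ in absolute value, so $M$ need not be simple and the hypothesis does not immediately apply. Taming this via cycle decomposition of $\sigma$ together with minimality of the counterexample is the standard device from the Rockafellar treatment of cyclical monotonicity, and it dovetails with \Cref{prop:permutations}, which supplies the correspondence that reduces the argument, in each direction, to a single computation once the reduction is in place.
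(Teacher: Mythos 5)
Your argument is correct and follows the same route as the paper's proof, which is a one-line assertion that a violation of \eqref{eq:monotonicity} by a sequence in $G$ is the same thing as a positive-cost simple move with $\set{M>0}\subset G$, read off from \Cref{prop:permutations} and \eqref{eq:Kofsimple}. The only difference is that you make explicit the degeneracy issue in the converse (the formal combination $\sum_i\left(\delta_{x_i}\otimes\delta_{y_i}-\delta_{x_i}\otimes\delta_{y_{\sigma(i)}}\right)$ need not be a simple move when atoms repeat), and your fix --- reduce to a single cycle via the cycle decomposition of $\sigma$, take a minimal violating $k$, and observe that coinciding positive (or negative) atoms share their first coordinate, so the corresponding transposition splits the cycle without changing the sum --- is sound and fills a gap the paper's proof passes over silently.
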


\begin{proof}
Assume there exists a sequence in $G$ such that \eqref{eq:monotonicity} does not hold. This is equivalent to saying the corresponding move has a positive value and support contained in $G$.
\end{proof}

We restate the Fundamental Theorem as follows. The proof is to be found, for example, in \cite[\S~1.6]{santambrogio:2015OTAP}. We will provide a different proof in the next section.

\begin{proposition}
The coupling $\overline \gamma$ in $\jointof \mu \nu$ has minimal $c$-cost if, and only if, each admissible simple move has a non-positive $c$-cost.
\end{proposition}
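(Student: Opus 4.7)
The plan is to dispatch the two directions separately, with the nontrivial ingredient being the simple-move decomposition established earlier in the section.

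The forward implication is essentially definitional. Assuming $\overline\gamma$ is optimal, let $M$ be any admissible simple move. By admissibility there is $\alpha > 0$ with $\overline\gamma - \alpha M \in \jointof \mu \nu$, and linearity of the $c$-cost gives $c(\overline\gamma - \alpha M) = c(\overline\gamma) - \alpha c(M)$. Optimality forces this to be $\geq c(\overline\gamma)$, hence $c(M) \leq 0$.

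For the converse, I would take an arbitrary competitor $\gamma \in \jointof \mu \nu$ and aim to show $c(\overline\gamma) \leq c(\gamma)$. The natural object to study is the move $M = \overline\gamma - \gamma$: it has null margins, and whenever $M(x,y) > 0$ we have $\overline\gamma(x,y) > \gamma(x,y) \geq 0$, so $\set{M > 0} \subset \suppof{\overline\gamma}$, i.e.\ $M$ itself is admissible for $\overline\gamma$. The obstruction is that $M$ is essentially never simple, so the hypothesis cannot be applied to $M$ directly. To bridge the gap, I would invoke the previous proposition to write $M = \alpha_1 F_1 + \cdots + \alpha_k F_k$ with $\alpha_j > 0$ and each $F_j$ a simple move, the remainders $M_j = M - \sum_{i \leq j} \alpha_i F_i$ enjoying the nesting $\set{M_{j-1} > 0} \supset \set{M_j > 0}$ (and similarly for the negative parts). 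The bipartite-cycle construction used to prove that proposition picks the $+1$ entries of $F_j$ from $\set{M_{j-1} > 0}$, so by induction $\set{F_j > 0} \subset \set{M_{j-1} > 0} \subset \set{M > 0} \subset \suppof{\overline\gamma}$, and every $F_j$ is an admissible simple move for $\overline\gamma$. Applying the hypothesis to each $F_j$ gives $c(F_j) \leq 0$, and summing yields
\begin{equation*}
c(\overline\gamma) - c(\gamma) = c(M) = \sum_{j=1}^k \alpha_j c(F_j) \leq 0 \ .
\end{equation*}
Since $\gamma$ was arbitrary, $\overline\gamma$ is optimal.

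The main obstacle is the admissibility-preservation step: as stated, the decomposition proposition only controls the supports of the successive remainders, so one must either revisit its proof to extract the inclusion $\set{F_j > 0} \subset \set{M_{j-1} > 0}$ (which is automatic from the cycle construction) or record this as a small supplementary verification. Once that is in hand, the rest of the argument is routine linear bookkeeping, and no compactness or duality input is needed beyond the algebraic decomposition.
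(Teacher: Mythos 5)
Your argument is correct in both directions. The forward implication is the definitional one, and in the converse the key point --- that $M=\overline\gamma-\gamma$ satisfies $\set{M>0}\subset\suppof{\overline\gamma}$ and that the simple moves $F_j$ produced by the decomposition proposition inherit $\set{F_j>0}\subset\set{M_{j-1}>0}\subset\set{M>0}$ from the bipartite-cycle construction --- is exactly what makes every $F_j$ admissible for $\overline\gamma$, so the hypothesis applies summand by summand and $c(\overline\gamma)-c(\gamma)=\sum_j\alpha_j c(F_j)\le 0$. This is, however, not the route the paper takes: the paper gives no in-place proof, citing Santambrogio \S 1.6 (whose argument runs through $c$-cyclical monotonicity and the construction of Kantorovich potentials, i.e.\ duality) and deferring to ``a different proof in the next section,'' where the relevant machinery is the connectivity theorem joining any two couplings by basic moves that stay in the polytope. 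Your proof is more self-contained and, in one respect, cleaner than the connectivity route: a path of basic moves from $\overline\gamma$ to a competitor only guarantees that each move is admissible for the \emph{intermediate} coupling at which it is applied, not for $\overline\gamma$ itself, so the hypothesis of the proposition cannot be invoked for the later moves of such a path without additional work; decomposing the single move $\overline\gamma-\gamma$ sidesteps this entirely. The gap you flag --- that the decomposition proposition as stated only controls the supports of the successive remainders --- is real but minor, and is closed exactly as you say: the cycle construction places the $+1$ entries of each $F_j$ at positions of $\set{M_{j-1}>0}$, so the needed inclusion is automatic from the proof of that proposition even though it is not part of its statement.
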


Now we briefly discuss the algebraic
properties of simple moves, see \cite{sturmfels:1996}. \Cref{prop:permutations} shows that, given a set $G =
\setof{(x_i,y_i)}{i = 1,\dots, k} \subset X \times X$ and a permutation $\sigma \in S_k$, there exists a simple move $M(G,\sigma) = \sum_{i=1}^k \delta_{x_i}\otimes\delta_{y_i} - \delta_{x_i}\otimes\delta_{y_{\sigma(i)}}$, and, conversely, every simple move is of this type. Notice that the representation is not unique, because if $\sigma(i) = i$, then the two corresponding terms cancel.

Let us consider first the effect of the composition of two permutations. If $\sigma = \pi_1\pi_2$, then
\begin{multline*}
M(G,\sigma) = \sum_{i=1}^k
\delta_{x_i}\otimes\delta_{y_i} -
\delta_{x_i}\otimes\delta_{y_{\sigma(i)}} = \\
\left(\sum_{i=1}^k \delta_{x_i}\otimes\delta_{y_i} -
\delta_{x_i}\otimes\delta_{y_{\pi_2(i)}} \right) +
\left(\sum_{i=1}^k \delta_{x_i}\otimes\delta_{y_{\pi_2(i)}} -
\delta_{x_i}\otimes\delta_{y_{\pi_1\pi_2(i)}}\right) = \\ M(G,\pi_2) +
M(\pi_2 G,\pi_1) \ ,
\end{multline*}
where $\pi_2 G =  \setof{(x_i,y_{\pi_2(i)})}{i=1,\dots,k}$.

Now, every permutation is a product of circular permutations. Consider for example, the case $\sigma = \pi_1\pi_2$, where $\pi_1,\pi_2$ are circular permutations with support $I_1$ and $I_2$, respectively. Choose a coding such that $I_1 = \set{1,\dots,h}$, $I_2=\set{h+1,\dots,k}$. It follows that
\begin{equation*}
M(G,\sigma) = \sum_{i=1}^h \left(\delta_{x_i}\otimes\delta_{y_i} -
\delta_{x_i}\otimes\delta_{y_{i+1}}\right) + \sum_{j=h+1}^k
\left(\delta_{x_j}\otimes\delta_{y_{j}} - \delta_{x_j}\otimes\delta_{y_{j+1}}\right) \ .
\end{equation*}

That is, every simple move is the sum of simple moves associated to a
circular permutation on disjoint supports. In turn, this shows that
the support of a simple move is a union of cycles.

Last case to consider is the case of a permutation given as a product
of exchanges. If $\pi = (i \leftrightarrow j)$, and $G =
\set{(x_1,y_1),(x_2,y_2)}$, then the simple move is $\delta_{x_1}
\otimes \delta_{y_1} + \delta_{x_2}
\otimes \delta_{y_2} - \delta_{x_1}
\otimes \delta_{y_2} - \delta_{x_2}
\otimes \delta_{y_1}$, which is, in fact, a basic move. Indeed, every
simple move is the sum of basic moves. This is a representation
different from that obtained by considering a linear basis because the
representing basic moves depend on the original simple move. They are
not restricted to be elements of a basis.

We conclude this section highlighting that the optimality is related with the
existence of cycles in the support of the coupling, as the following
proposition suggests.

\begin{proposition} \label{cicliottimi}
Let $\gamma \in {\mathcal P}(\mu,\nu)$ be a coupling such that
$\suppof{\gamma}$ contains a cycle and assume that the cost is a
distance, denoted by $d$. Then there exists a coupling $\gamma^*\in
\jointof \mu \nu$ such that $d(\gamma^*)\le d(\gamma)$ and
$\gamma^* - \gamma$ is proportional to a simple move.
\end{proposition}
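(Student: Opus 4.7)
The plan is to extract from the cycle a simple move that redistributes the cyclic mass onto the diagonal of $X \times X$. Since $d$ is a distance, it vanishes on the diagonal, so subtracting such a move can only lower or preserve the $d$-cost.

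Concretely, I fix an irreducible directed cycle $z_0 \to z_1 \to \cdots \to z_{k-1} \to z_0$ (indices mod $k$, with $z_0, \ldots, z_{k-1}$ pairwise distinct and $k \geq 2$) inside $\suppof\gamma$, and define
\[
M = \sum_{i=0}^{k-1} \delta_{z_i} \otimes \delta_{z_{i+1}} \;-\; \sum_{i=0}^{k-1} \delta_{z_i} \otimes \delta_{z_i}.
\]
The first step is to verify that $M$ is a simple move. The distinctness of the $z_i$ together with $z_{i+1} \neq z_i$ keeps the positive and negative supports disjoint, so $M$ takes values in $\{-1, 0, 1\}$; and in each row $z_j$ the $+1$ at $(z_j, z_{j+1})$ cancels the $-1$ at $(z_j, z_j)$, with the analogous cancellation in each column, so the margins vanish.

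Next I check admissibility and compare costs. Since each cycle edge lies in $\suppof\gamma$, we have $\{M > 0\} \subset \suppof\gamma$; choosing $\alpha = \min_i \gamma(z_i, z_{i+1}) > 0$, the table $\gamma^* = \gamma - \alpha M$ is non-negative and, having the same margins as $\gamma$, lies in $\mathcal P(\mu,\nu)$. By linearity,
\[
d(\gamma^*) - d(\gamma) = -\alpha \sum_{i=0}^{k-1} \bigl( d(z_i, z_{i+1}) - d(z_i, z_i) \bigr) = -\alpha \sum_{i=0}^{k-1} d(z_i, z_{i+1}) \leq 0,
\]
using $d(z,z) = 0$. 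Finally, $\gamma^* - \gamma = \alpha(-M)$ is proportional to the simple move $-M$, as required.

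I do not anticipate any serious obstacle. The only mildly delicate choice is the compensating diagonal term in $M$: the naive alternative $\sum_i \delta_{z_i} \otimes \delta_{z_{i+1}} - \sum_i \delta_{z_{i+1}} \otimes \delta_{z_i}$ would also have null margins, but it would require the reversed edges $(z_{i+1}, z_i)$ to lie in $\suppof\gamma$, which the hypothesis does not provide. Replacing those reversed edges with self-loops both restores admissibility and, thanks to the diagonal vanishing of the distance, guarantees the cost does not increase.
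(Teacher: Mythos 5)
Your construction is internally correct, and where it applies it is actually cleaner than the paper's argument: a single simple move empties the whole directed cycle onto the diagonal, and only $d(z,z)=0$ is needed rather than the triangle inequality. However, it proves the proposition only under the stronger hypothesis that $\suppof{\gamma}$ contains a \emph{directed} cycle $z_0\to z_1\to\cdots\to z_{k-1}\to z_0$, i.e.\ that all cells $(z_i,z_{i+1})$ are actually present in the support with that consistent orientation. The paper's own proof makes clear that ``cycle'' here means a cycle of the underlying undirected graph of $\suppof{\gamma}$, where consecutive edges may point in opposite directions, as in $x_1\to x_2\leftarrow x_3\to x_4\leftarrow x_1$. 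In that configuration the support contains $(x_1,x_2)$, $(x_3,x_2)$, $(x_3,x_4)$, $(x_1,x_4)$ but possibly none of the forward cells your move requires, so $\set{M>0}\not\subset\suppof{\gamma}$, no positive $\alpha$ keeps $\gamma-\alpha M$ non-negative, and your construction produces nothing. Such couplings exist (e.g.\ a product measure restricted to two rows and two columns), so this is a genuine gap, not a matter of taste.

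To close it you would need the two remaining cases treated in the paper: if the cycle has a pair of concordant consecutive arrows $x_1\to x_2\to x_3$, apply the basic move with $+1$ at $(x_1,x_2)$ and $(x_2,x_3)$ and $-1$ at $(x_1,x_3)$ and $(x_2,x_2)$, whose cost is non-negative by the triangle inequality (this is where the distance hypothesis is really used, beyond $d(z,z)=0$); if every pair of consecutive arrows is discordant, both the positive and the negative part of the alternating simple move supported on the cycle lie in $\suppof{\gamma}$, so the move can be subtracted with either sign and one of the two signs does not increase the cost. Your diagonal trick handles the directed case (including the $2$-cycle $x_1\leftrightarrows x_2$, where it reduces to the paper's basic move $(\delta_{x_1}-\delta_{x_2})\otimes(\delta_{x_2}-\delta_{x_1})$), but the statement as intended is strictly broader.
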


\begin{proof}
First assume that $\suppof{\gamma}$ has a cycle with two elements of the form  $x_1 \leftrightarrows x_2$.
In this case the basic move $(\delta_{x_1}-\delta_{x_2})\otimes(\delta_{x_2}-\delta_{x_1})$ clearly deletes the cycle and reduces the cost, with $\alpha = \min\{\gamma(x_1,x_2),\gamma(x_2,x_1)\}$.

Assume now that $\suppof{\gamma}$ contains a cycle of length greater than 2. Two cases arise.

If there are two concordant consecutive arrows of the form $x_1
\rightarrow x_2 \rightarrow x_3$, then the move
$\left(\delta_{x_1}  \otimes
\delta_{x_2} + \delta_{x_2}  \otimes \delta_{x_3}\right) -\left( \delta_{x_1}
\otimes \delta_{x_3}+ \delta_{x_2}  \otimes \delta_{x_2}\right) $, with $\alpha = \min\{\gamma(x_1,x_2),\gamma(x_2,x_3)\}$, is
admissible and reduces the cost by virtue of the triangular inequality,
\begin{equation*}
 d(x_1,x_2)+d(x_2,x_3)-d(x_1,x_3)-d(x_2,x_2) \geq 0 \ .
\end{equation*}
Moreover, applying this move, the original cycle is replaced by a
cycle with one edge less.

Finally, if all consecutive edges of $\suppof{\gamma}$ are discordant, such as in
\begin{equation*}
x_1 \rightarrow x_2 \leftarrow x_3 \rightarrow x_4 \leftarrow x_5 \rightarrow x_6 \leftarrow x_1 \, ,
\end{equation*}
then an integer move (not necessarily basic) can be applied both with positive and negative sign. For the example above, the relevant move is
\begin{equation*}
(\delta_{x_1}\otimes \delta_{x_2}+\delta_{x_3}\otimes \delta_{x_4}+\delta_{x_5}\otimes \delta_{x_6}) - (\delta_{x_1}\otimes \delta_{x_6}+\delta_{x_3}\otimes \delta_{x_2}+\delta_{x_5}\otimes \delta_{x_4}) \, .
\end{equation*}
Choosing a sign such that the cost does not increase, and
\begin{equation*}
    \alpha=\min\{\gamma(1,2),\gamma(3,4),\gamma(5,6)\}
    \quad \mbox{or} \quad \alpha=\{\gamma(1,6),\gamma(3,2),\gamma(5,4)\}
\end{equation*}
depending on the sign, one edge of the circuit is deleted.

Notice that all the moves used to reduce a cycle do not produce new cycles because their supports are contained in the relevant cycle.
\end{proof}

\section{Couplings, homophily, and moves} \label{sec:gini}

Early in the $20^{th}$ century, Gini \cite{gini:1914dissomiglianza} defined the notion of index of homophily for a sample $(x_i,y_i)_{i=1}^N$ of a bi-variate real random variable $(X,Y)$. His aim was to discuss a general notion of statistical dependence by comparing the value of $\expectof{\aval{X-Y}}$ with its minimum and maximum value in the class of joint probability functions with the same margins. Based on that, Gini introduced an associated statistical index that was extensively studied in the following years by himself and by others, especially by Salvemini \cite{salvemini:1939} and Dall'Aglio \cite{dallaglio:1956}. A modern account of the Gini methods is to be found in the monograph by Yitzhaki and Schechtman \cite{yitzhaki|schechtman:2013}. Below we describe his work in the context of the subsequent developments by Kantorovich, who was inspired more by early work by Monge on OT than by Gini's methodological ideas. Here we use Gini's method as an intermediate tool to solve more general Kantorovich problems.

Given a bi-variate real sample $(x_i,y_i)_{i=1}^N$, let us sort in ascending order both the first and the second variables, respectively,
\begin{gather*}
x_{(1)} \le x_{(2)} \le \ldots \le x_{(N)} \ , \\
y_{(1)} \le y_{(2)} \le \ldots \le y_{(N)} \ .
\end{gather*}
This operation produces a new bi-variate sample $(x_{(i)},y_{(i)})$, $i=1,\dots,N$, with the same marginal sample distributions as the original one. Gini calls it the \emph{co-graduation} of the original sample.

Clearly, this is a special case of the general theory of coupling, because the original discrete sample distribution and its co-graduation have the same margins.

The difference between the original sample distribution and the co-graduation is the simple move
\begin{equation*}
  \sum_{i=1}^N \delta_{x_i} \otimes \delta_{y_i} - \sum_{i=1}^N \delta_{x_{\sigma'(i)}} \otimes \delta_{y_{\sigma''(i)}} \ ,
\end{equation*}
where $\sigma'$ and $\sigma''$ are permutations of $S_N$ that provide the sorting of each of the two sequences.

More generally, we can say that two finite real sequences $f, g \colon \set{1,\dots,N}  \to \reals$ are \emph{co-monotone} (resp. \emph{counter-monotone}) if
\begin{equation*}
  (f(i) - f(i+1))(g(i)-g(i+1)) \geq  0 \ (\text{resp.} \leq 0) \ , \quad  i = 1, \dots, N-1\ .
\end{equation*}

Clearly, two finite real sequences are co-monotone if they are co-graduated, and two co-monotone sequences are turned into two co-graduated sequences by a suitable common permutation.

We observe that, if a joint probability function has rational probabilities, then it can be simulated by a finite sequence of couplings.
The following proposition is the original Gini's theorem. Notice that the theorem provides a special case of cyclical monotonicity for the distance $d(x,y)=|x-y|$.

\begin{proposition} \label{prop:gini14}
  Given a finite real double sequence $(x_i,y_i)_{i=1}^N$, with joint sample distribution $\gamma$ and marginal distributions $\mu$ and $\nu$, the joint distribution of each bi-variate sequence \begin{equation*}(x_{\sigma'(i)},y_{\sigma'(i)})_{i=1}^N \ , \quad \sigma',
  \sigma'' \in S_N
\end{equation*}
is a coupling of $(\mu,\nu)$. The index
  \begin{equation*}
    M_G(\sigma',\sigma'') = \sum_{i=1}^N \aval{x_{\sigma'(i)} - y_{\sigma''(i)}}
  \end{equation*}
is minimum when the two sequences are co-monotone and is maximum when they are counter-monotone.
\end{proposition}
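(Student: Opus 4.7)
The first assertion is immediate: applying the two permutations coordinate-wise leaves each marginal empirical distribution unchanged, so $(x_{\sigma'(i)},y_{\sigma''(i)})_{i=1}^N$ is always a coupling of $(\mu,\nu)$. For the extremal claim, I may as well take $\sigma'$ to be the identity (by relabelling); the question then becomes which arrangement $(y_{\sigma(i)})_i$ of the second coordinate minimises or maximises $\sum_{i=1}^N |x_i - y_{\sigma(i)}|$. The plan is to reduce to a single four-point inequality and then run a bubble-sort argument on the number of inversions of $\sigma$.

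The key analytic step is the following: for real numbers with $a \leq b$ and $c \leq d$,
\[
|a-c| + |b-d| \;\leq\; |a-d| + |b-c|.
\]
I would prove this by observing that $t \mapsto |t-d| - |t-c|$ is non-increasing on $\reals$ (it equals $d-c$ on $(-\infty,c]$, is affine with slope $-2$ on $[c,d]$, and equals $c-d$ on $[d,\infty)$). Hence $a \leq b$ gives $|a-d| - |a-c| \geq |b-d| - |b-c|$, which rearranges to the claim.

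With this four-point inequality in hand, the remainder is standard. After the relabelling that makes $(x_i)_i$ non-decreasing, suppose $(y_{\sigma(i)})_i$ still contains an inversion; then there must be an \emph{adjacent} inversion, say at position $(i,i+1)$, meaning $y_{\sigma(i)} > y_{\sigma(i+1)}$ while $x_i \leq x_{i+1}$. Applying the four-point inequality with $a = x_i$, $b = x_{i+1}$, $c = y_{\sigma(i+1)}$, $d = y_{\sigma(i)}$ shows that transposing $\sigma(i)$ and $\sigma(i+1)$ does not increase $M_G$ while strictly reducing the number of inversions; iterating terminates at the co-monotone arrangement, so the minimum is attained there. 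The maximum claim is symmetric: swap any concordant adjacent pair into a discordant one to weakly increase the sum, ending at the counter-monotone arrangement. I expect the main obstacle to be streamlining this combinatorial bookkeeping; the four-point inequality itself is a one-line verification once cast in the monotone-difference form above. An alternative more in the spirit of this paper would be to observe that the four-point inequality already implies $c$-cyclical monotonicity of the support of the co-monotone coupling for $c(x,y) = |x-y|$ (by decomposing any cycle into adjacent transpositions) and then invoke the Fundamental Theorem of Optimal Transport recalled earlier.
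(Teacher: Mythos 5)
Your proof is correct and follows essentially the same route as the paper's: both arguments reduce the extremal claim to a comparison of two successive indices via a four-point inequality for the absolute value. The only differences are ones of polish — you derive the inequality $|a-c|+|b-d|\le|a-d|+|b-c|$ from the monotonicity of $t\mapsto|t-d|-|t-c|$ rather than by the paper's ``enumeration of all possible cases of signs,'' and you make explicit, via inversion counting, the bubble-sort termination that the paper compresses into the remark that it suffices to compare against the co-graduated case.
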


\begin{proof}
It is enough to consider (as Gini himself does) the co-graduated (respectively counter-graduated) case. Consider each pair of successive indices $i$ and $i+1$. Note first that both
\begin{equation*}
  \aval{x_{\sigma'(i)}-x_{\sigma'(i+1)}}+\aval{y_{\sigma''(i)}-y_{\sigma''(i+1)}} \ \text{and} \ \aval{x_{\sigma'(i)}-y_{\sigma''(i+1)}}+\aval{y_{\sigma''(i)}-x_{\sigma'(i+1)}}
  \end{equation*}
  have the lower bound
  \begin{equation*}\aval{(x_{\sigma'(i)}+y_{\sigma''(i)})-(x_{\sigma'(i+1)}+y_{\sigma''(i+1)})} \ .
\end{equation*}

Enumeration of all possible cases of signs of the differences shows that the minimum is actually the lower bound above and it occurs when the two sequences are co-monotone.
\end{proof}

\begin{remark}
From the point of view of transport theory, we have found that the coupling of maximal index is obtained through the cross-tabulation of the two co-graduated marginal distributions. In modern terms, we can say that Gini has found the $L^1$-optimal coupling of the two marginal distributions when the frequencies are rational.
\end{remark}
\begin{example}\label{ex:homo2-count}
  Assume the bi-variate distribution is represented in a table where the values of the two margins are ordered. If the marginal counts are $4, 6, 2, 4$, for the first variable, and $2, 11, 2, 1$, for the second one, then the co-graduation of the two variables is
\begin{equation*}
  \begin{array}{c|cccccccccccccccc}
t & 1 & 2 & 3 & 4 & 5 & 6 & 7 & 8 & 9 & 10 & 11 & 12 & 13 & 14 & 15 & 16 \\
    \hline
    x & 1 & 1 & 1 & 1 & 2 & 2 & 2 & 2 & 2 & 2 & 3 & 3 & 4 & 4 & 4 & 4 \\
    y & 1 & 1 & 2 & 2 & 2 & 2 & 2 & 2 & 2 & 2 & 2 & 2 & 2 & 3 & 3 & 4
\end{array}
\end{equation*}

The table of maximal homophily $H$ is obtained by pairing these values,
\begin{equation*}
\begin{blockarray}{cccccc}
\begin{block}{c(cccc)c}
 & 2 & 2 & 0 & 0 & 4 \\
& 0 & 6 & 0 & 0 & 6  \\
H=&0 & 2 & 0 & 0 & 2  \\
  &0 & 1 & 2 & 1 &4 \\
  \end{block}
&2 &11 & 2 &1
\end{blockarray}
\end{equation*}
and $M_G=8$.
\end{example}

Proposition \ref{prop:gini14} states that
\begin{equation*}
\sum_{i,j} |a_i-b_j| n(i,j) - \sum_{i,j} |a_i-b_j| n_{\text{co}} (i,j) \geq 0
\end{equation*}
where $a_i$ and $b_j$ are the values of the two margins, respectively, and $n(i,j)$ and $n_{\text{co}} (i,j)$ are the counts in the original table and in $H$, respectively. The previous argument applies to tables of counts, that is, when the frequencies are rational numbers.

More generally, the table $H$ of the example above could be derived from the margins by the so called North-West rule, that is, moving left to right and top to bottom each cell gets the maximum value compatible with the marginal constraints. See the history of the earlier results in \cite{dallaglio:1991advances}. We are going to see that the North-West rule does produce the maximal homophily coupling in the general discrete case.

In the following, without restriction of generality, consider the case where both the values of $x$ and $y$ are $\set{1,\dots,n}$. In this way we have a natural total order on the sample space.

\begin{proposition} \label{pr:homosampling}
Let $H = [n(i,j)]_{i,j =1}^n$ be the maximal homophily table. Then for all pairs $(i,j)$ it holds
\begin{align}
  n(i,j) &= \min \left\{n(i,+) - \sum_{k<j} n(i,k), \ n(+,j)-\sum_{h<i} n(h,j)\right\} \nonumber \\ &=
  \min \left\{\sum_{k \geq j} n(i,k), \ \sum_{h \geq i} n(h,j)\right\} \ .\label{eq:gini-is-nw-2} \end{align}
\end{proposition}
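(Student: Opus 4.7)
The plan is to work directly from the cograduation that defines $H$. Write $r_i = n(i,+)$, $c_j = n(+,j)$, and introduce the cumulative sums $R_i = \sum_{k \le i} r_k$, $C_j = \sum_{k \le j} c_k$, with the convention $R_0 = C_0 = 0$ and $R_n = C_n = N$. In the cograduated sample, the positions $t \in \set{1,\dots,N}$ with $x_{(t)} = i$ form the integer interval $A_i = (R_{i-1}, R_i]$, and those with $y_{(t)} = j$ form $B_j = (C_{j-1}, C_j]$. Since $n(i,j)$ in $H$ counts the positions lying in both sets, the elementary formula for the intersection of two integer intervals yields
\begin{equation*}
n(i,j) = \aval{A_i \cap B_j} = \max\bigl(0,\ \min(R_i, C_j) - \max(R_{i-1}, C_{j-1})\bigr).
\end{equation*}

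Next I compute the quantities appearing in \Cref{eq:gini-is-nw-2}. Because $n(i,+) - \sum_{k<j} n(i,k) = \sum_{k \ge j} n(i,k)$ and symmetrically for the column, the two formulations in the statement are trivially equivalent, so it is enough to evaluate $\sum_{k \ge j} n(i,k)$ and $\sum_{h \ge i} n(h,j)$. For the row, $\bigcup_{k \ge j} B_k = (C_{j-1}, N]$, hence
\begin{equation*}
\sum_{k \ge j} n(i,k) = \aval{A_i \cap (C_{j-1}, N]} = \max\bigl(0,\ R_i - \max(R_{i-1}, C_{j-1})\bigr),
\end{equation*}
and by the analogous argument $\sum_{h \ge i} n(h,j) = \max\bigl(0,\ C_j - \max(R_{i-1}, C_{j-1})\bigr)$.

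The proof then finishes with the elementary identity
\begin{equation*}
\min\bigl(\max(0, a),\ \max(0, b)\bigr) = \max\bigl(0,\ \min(a, b)\bigr),
\end{equation*}
which is verified by a short case split on the signs of $a$ and $b$. Applying it with $a = R_i - \max(R_{i-1}, C_{j-1})$ and $b = C_j - \max(R_{i-1}, C_{j-1})$ shows that $\min\set{\sum_{k \ge j} n(i,k),\ \sum_{h \ge i} n(h,j)}$ coincides with the interval-intersection expression derived at the start, which is exactly $n(i,j)$. The only real bookkeeping obstacle is tracking the outer $\max(0,\cdot)$ when the intervals are disjoint; once that is absorbed into the identity above, both equalities in \Cref{eq:gini-is-nw-2} follow without further computation.
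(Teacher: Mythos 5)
Your proof is correct, but it takes a genuinely different route from the paper's. The paper argues structurally: for any $(i,j)$ it shows, by a short order-contradiction on the positions $t$ in the cograduated sample, that the row tail $\sum_{k>j} n(i,k)$ and the column tail $\sum_{h>i} n(h,j)$ cannot both be positive, and then reads off the claimed identity from $\min\{\sum_{k\ge j} n(i,k),\ \sum_{h\ge i}n(h,j)\} = n(i,j) + \min\{\sum_{k>j} n(i,k),\ \sum_{h>i}n(h,j)\}$. You instead compute everything in closed form: representing the level sets of the sorted sequences as the integer intervals $(R_{i-1},R_i]$ and $(C_{j-1},C_j]$ of the cumulative margins, you obtain the explicit formula $n(i,j)=\max\bigl(0,\ \min(R_i,C_j)-\max(R_{i-1},C_{j-1})\bigr)$ and reduce the proposition to the elementary identity $\min(\max(0,a),\max(0,b))=\max(0,\min(a,b))$, which you verify correctly. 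Each approach has its merits: the paper's argument isolates the ``staircase'' property of the support of $H$ (at most one of the two tails is nonzero), which is the structural fact reused in the connection to the North-West rule; your computation is more mechanical but yields, as a byproduct, an explicit closed-form expression for the entries of $H$ in terms of the cumulative margins, which the paper never states and which immediately exhibits $H$ as the distribution function coupling. Both proofs are complete; no gap to report.
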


\begin{proof}
  For each pair of indices $(i,j)$, consider $(h,j)$, $h > i$, and $(i,k)$, $k > j$. Let us show that $n(h,j)$ and $n(i,k)$ cannot be both positive. In fact, assume there exists $t_1$ and $t_2$ such that $x_{t_1} = h, y_{t_1} = j, x_{t_2} = i, y_{t_2} = k$. Necessarily, $t_1 \neq t_2$. As $x$ is non-decreasing and $x_{t_1} > x_{t_2}$, it holds $t_1 > t_2$. As $y$ is non-decreasing and $y_{t_1} < y_{t_2}$, it holds $t_1 < t_2$. We have obtained a contradiction and we have shown that only one of the two counts left and down can be positive.

  More precisely, if $n(i,\overline k)>0$ for some $\overline k > j$ then $n(h,j) = 0$ for all $h > i$, that is, if the rest of the row is not all zero, then the rest of the column is. The same holds exchanging rows and columns.

To conclude, write \Cref{eq:gini-is-nw-2} as
\begin{equation*}
n(i,j) = \min \left\{\sum_{k \geq j} n(i,k), \ \sum_{h \geq i} n(h,j)\right\} =n(i,j) + \min \left\{  \sum_{k > j} n(i,k), \ \sum_{h > i} n(h,j)\right\}
\end{equation*}
and observe that at least one among $\sum_{k > j} n(i,k)$ and $\sum_{h > i} n(h,j)$ is zero.
\end{proof}

\begin{proposition} \label{prophomo}
Given two probability functions $\mu$ and $\nu$ on $X$, the lexicographic recursion
\begin{equation} \label{eq:homo}
    \gamma_H(i,j) = \min\left\{\mu(i) - \sum_{k<j} \gamma_H(i,k), \ \nu(j)-\sum_{h<i}\gamma_H(h,j) \right\} \ , \quad i,j \in X \ ,
\end{equation}
uniquely defines the \emph{homophily coupling} $\gamma_H \in \mathcal P(\mu,\nu)$.

\end{proposition}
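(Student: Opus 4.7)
The plan is to process the cells $(i,j)$ in lexicographic order — first all of row $1$, then all of row $2$, and so on — and to check three things in turn: the recursion \eqref{eq:homo} is well-posed, the values it produces are non-negative, and the resulting array has the prescribed margins. Well-posedness is immediate, because when we compute $\gamma_H(i,j)$ the two partial sums $\sum_{k<j}\gamma_H(i,k)$ and $\sum_{h<i}\gamma_H(h,j)$ involve only pairs that precede $(i,j)$ lexicographically, and uniqueness is built into the recursion itself.

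To verify non-negativity I would introduce the running residuals
\begin{equation*}
R_i(j) = \mu(i) - \sum_{k \leq j} \gamma_H(i,k), \qquad C_j(i) = \nu(j) - \sum_{h \leq i} \gamma_H(h,j),
\end{equation*}
with $R_i(0) = \mu(i)$ and $C_j(0) = \nu(j)$. Since $\gamma_H(i,j) = \min\{R_i(j-1), C_j(i-1)\}$, one gets $R_i(j) = \max\{0,\, R_i(j-1) - C_j(i-1)\}$ and an analogous identity for $C_j(i)$, so a direct induction along the lexicographic order propagates $R_i(j)\geq 0$ and $C_j(i)\geq 0$, and in particular $\gamma_H(i,j)\geq 0$. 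A useful by-product is that at least one of $R_i(j)$ and $C_j(i)$ vanishes after every step, which is the continuous analogue of the dichotomy established in \Cref{pr:homosampling}.

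For the margin identities, the cleanest route is to exhibit the closed form
\begin{equation*}
\gamma_H(i,j) = \bigl[\min(F(i),G(j)) - \max(F(i-1),G(j-1))\bigr]^+,
\end{equation*}
where $F(i) = \sum_{h \leq i} \mu(h)$, $G(j) = \sum_{k \leq j} \nu(k)$, and $F(0) = G(0) = 0$. Geometrically, $\gamma_H(i,j)$ is the length of the overlap between the sub-intervals $[F(i-1),F(i)]$ and $[G(j-1),G(j)]$ of $[0,1]$. Granting the formula, summing over $j$ yields $\sum_j \gamma_H(i,j) = F(i)-F(i-1) = \mu(i)$ because $\{[G(j-1),G(j)]\}_{j=1}^n$ tiles $[0,1]$ and covers $[F(i-1),F(i)]$; the column identity follows symmetrically.

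The main obstacle is verifying that this closed form actually satisfies \eqref{eq:homo}. I would proceed by induction on the lexicographic order: at $(i,j)$ the inductive hypothesis lets one rewrite $R_i(j-1)$ and $C_j(i-1)$ as differences involving the boundary values $F(i-1),F(i),G(j-1),G(j)$, and then a short case analysis on the sign of $F(i) - G(j)$ — equivalently, on whether the $i$-th row interval extends past the right endpoint of the $j$-th column interval — identifies the smaller of the two residuals and matches it with the overlap length above. This closes the loop and establishes that $\gamma_H\in\jointof\mu\nu$.
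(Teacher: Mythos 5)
Your argument is correct, and it reaches the conclusion by a genuinely different route from the paper's. The paper proves \Cref{prophomo} by an iterative reduction on the table: it fills the first row explicitly via a case analysis on $\gamma_H(1,1)=\min\{\mu(1),\nu(1)\}$, locates the column $\overline k$ at which the mass $\mu(1)$ is exhausted, observes that what remains is the same lexicographic recursion on an $(n-1)\times(n-\overline k+1)$ sub-table with updated margins $\left(\mu(2),\dots,\mu(n)\right)$ and $\left(-\mu(1)+\sum_{k\le\overline k}\nu(k),\nu(\overline k+1),\dots,\nu(n)\right)$, and iterates; no closed form is ever produced. You instead (i) get non-negativity from the residual bookkeeping $R_i(j)=\max\{0,R_i(j-1)-C_j(i-1)\}$, and (ii) identify the solution explicitly as the overlap of quantile intervals, $\gamma_H(i,j)=\left[\min(F(i),G(j))-\max(F(i-1),G(j-1))\right]^+$, from which both margin identities are immediate because the column intervals tile $[0,1]$. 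The deferred verification that this formula satisfies \eqref{eq:homo} does go through: by the inductive hypothesis one gets $R_i(j-1)=\left[F(i)-\max(F(i-1),G(j-1))\right]^+$ and $C_j(i-1)=\left[G(j)-\max(F(i-1),G(j-1))\right]^+$, and $\min\left([a]^+,[b]^+\right)=\left[\min(a,b)\right]^+$ closes the loop, so the gap you flag as "the main obstacle" is only a routine computation. What your route buys is the identification of $\gamma_H$ as the comonotone (quantile) coupling, i.e., the law of $(F^{-1}(U),G^{-1}(U))$ for $U$ uniform on $[0,1]$; this makes the margins transparent and connects directly to \Cref{pr:homosampling} and to the later optimality of $\gamma_H$ for the Euclidean cost. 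What the paper's route buys is that it stays entirely inside the recursive definition and needs no guessed formula. Note also that your residual dichotomy (at each step at least one of $R_i(j)$, $C_j(i)$ vanishes) is exactly the mechanism the paper uses implicitly when it zeroes out the rest of a row or column.
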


\begin{proof}
  First note that \Cref{eq:homo} is well defined because the right hand side of the equation involves pairs of indices which precede the current one $(i,j)$.

We want $\gamma_H$ to be non-negative with margins $\mu$ and $\nu$, and $\sum_{i,j} \gamma_H(i,j) = 1$. To prove the proposition, we proceed by recursion on the lines. Consider the first element $\gamma_H(1,1)=\min\{\mu(1),\nu(1)\}$. If $\mu(1)=\nu(1)$, then $\gamma_H(1,1)$ equals the common value and all other elements in the first row and in the first column are zero. Consider now the square sub-table with $i,j = 2,\dots, N$ with the given marginal values. In the case $\mu(1) < \nu(1)$, then $\gamma_H(1,1) = \mu(1)$ and all the other elements of the first row are zero. The sub-table with $i > 1$ has the original first margin and second margin equal to $\nu(1) - \mu(1), \nu(2), \dots, \nu(N)$.  The last case is $\gamma_H(1,1)=\nu(1) \leq \mu(1)$, when all the other entries of the first column are zero. Suppose now that in the first row the entries until the position $\overline k - 1$ are $\gamma_H(1,k)=\nu(k)$ and
$\gamma_H(1,\overline k)=\mu_1-\sum_{k < \overline k} \nu(k)$. The subsequent entries of the first row are zero, and the sum of the first row is equal to $\mu(1)$.

Now consider the sub-table with $n-1$ rows and $(n-\overline k+1)$ columns. The row and column margins of such a table are:
\[
\left(\mu(2), \ldots, \mu(n)\right) \ \textrm{ and } \ \left(-\mu(1) + \sum_{k \le \overline k} \nu(k) , \nu(\overline k + 1), \ldots , \nu(n)\right)
\]
respectively, and the table sums up to $1-\mu(1)$.

As the above procedure does not depend on the normalization of the margins, we can apply the procedure iteratively.
\end{proof}

\begin{example}
Let us consider the probability functions $\mu=(0.5,0.1,0.1,0.3)$ and $\nu=(0.2,0.2,0,2,0.4)$. The $H$-coupling is
\begin{equation*}
\begin{blockarray}{cccccc}
\begin{block}{c(cccc)c}
 & 0.2 & 0.2 & 0.1 & 0&0.5 \\
 & 0 &   0   & 0.1 & 0&0.1\\
\gamma_H(\mu,\nu)=& 0 &   0   &  0  & 0.1 & 0.1 \\
& 0 &  0   &  0  & 0.3 &0.3 \\ \end{block}
&  0.2 & 0.2 & 0.2 & 0.4
\end{blockarray}
\end{equation*}
\end{example}

\begin{theorem} \label{th:connected}
Given two couplings $\gamma, \widetilde{\gamma} \in {\mathcal{P}}(\mu,\nu)$ there exist a sequence of basic moves $M_1, \ldots, M_k$ and a sequence of real positive numbers $\alpha_1, \ldots, \alpha_k$ such that
\begin{equation*}
    \widetilde{\gamma} = \gamma- \sum_{i=1}^k \alpha_i M_i
\end{equation*}
and
\begin{equation*}
    \gamma - \sum_{i=1}^{\overline{k}} \alpha_i M_i \in {\mathcal{P}(\mu,\nu)}
\end{equation*}
for all $\overline{k}=1,\ldots,k$.
\end{theorem}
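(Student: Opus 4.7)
The plan is to peel off basic moves one at a time from the move $M=\gamma-\widetilde\gamma$ while staying inside $\jointof\mu\nu$. I first invoke the previous proposition to write $M=\alpha_1 F_1+\cdots+\alpha_m F_m$, with simple moves $F_j$, scalars $\alpha_j>0$, and remainders $M_j=M-\sum_{i\leq j}\alpha_i F_i$ satisfying $\set{M_{j-1}>0}\supset\set{M_j>0}$ and $\set{M_{j-1}<0}\supset\set{M_j<0}$.

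Next, I verify that the intermediate tables $\gamma_j=\gamma-(\alpha_1 F_1+\cdots+\alpha_j F_j)=\widetilde\gamma+M_j$ lie in $\jointof\mu\nu$. Their margins are correct because each $F_j$ is a move. For non-negativity, the key observation, to be proved by induction on $j$ from the construction in the quoted proposition, is that $M_j$ shares the sign of $M$ at every position (or is zero) and satisfies $\aval{M_j}\leq\aval{M}$ pointwise. Indeed, that construction takes $\set{F_j>0}\subset\set{M_{j-1}>0}$, $\set{F_j<0}\subset\set{M_{j-1}<0}$, and $\alpha_j\leq\min\aval{M_{j-1}}$ on $\suppof{F_j}$, so at a point where $F_j(x,y)>0$ one gets $M_j(x,y)=M_{j-1}(x,y)-\alpha_j\in[0,M_{j-1}(x,y)]$ and symmetrically where $F_j(x,y)<0$. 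Consequently, at any point where $M_j\geq 0$ one has $\gamma_j\geq\widetilde\gamma\geq 0$, and at any point where $M_j<0$ one has $M\leq M_j<0$, hence $\gamma_j=\widetilde\gamma+M_j\geq\widetilde\gamma+M=\gamma\geq 0$.

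Finally, I refine each simple-move step $\gamma_{j-1}\to\gamma_j=\gamma_{j-1}-\alpha_j F_j$ into basic-move steps inside $\jointof\mu\nu$. Since the support of a simple move is a disjoint union of cycles, it is enough to treat a single cycle $(x_1,y_1)\to(x_1,y_2)\to(x_2,y_2)\to\cdots\to(x_k,y_k)\to(x_k,y_1)$ on which $F_j=+1$ at the odd vertices and $-1$ at the even ones. Take $B_1=(\delta_{x_1}-\delta_{x_2})\otimes(\delta_{y_1}-\delta_{y_2})$: its positive support $\set{(x_1,y_1),(x_2,y_2)}$ sits inside $\set{F_j>0}$, on which the current coupling $\gamma_{j-1}$ is at least $\alpha_j$, so $\gamma_{j-1}-\alpha_j B_1\in\jointof\mu\nu$. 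A direct check shows $F_j-B_1$ is supported on the cycle of length $2(k-1)$ obtained by dropping $(x_1,y_1),(x_1,y_2),(x_2,y_2)$ and inserting the single new $+1$ at $(x_2,y_1)$, whose value in the updated table equals $\gamma_{j-1}(x_2,y_1)+\alpha_j\geq\alpha_j$; induction on $k$ then produces $\alpha_j F_j=\alpha_j(B_1+\cdots+B_{k-1})$ with every prefix coupling in $\jointof\mu\nu$. The main obstacle is exactly this last ordering issue: algebraically, a cycle move admits many basic-move decompositions, but only carefully ordered ones preserve non-negativity step by step, and the cycle-peeling recursion above is what makes such an ordering explicit.
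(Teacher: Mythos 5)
Your argument is correct, but it follows a genuinely different route from the paper's. The paper reduces \Cref{th:connected} to \Cref{pr:homo2}: every coupling is connected to the canonical homophily coupling $\gamma_H(\mu,\nu)$ of \Cref{prophomo} by a lexicographic, North-West-corner-style greedy scan that applies one admissible basic move at a time, and the path from $\gamma$ to $\widetilde\gamma$ is then obtained by concatenating the path $\gamma\to\gamma_H$ with the reversal of $\widetilde\gamma\to\gamma_H$ (reversed basic moves being again basic moves, up to sign). You instead work directly on the difference $M=\gamma-\widetilde\gamma$: you invoke the paper's cycle-peeling decomposition of a move into positive multiples of simple moves with nested signed supports, verify pointwise that the remainders $M_j$ keep the sign of $M$ and are dominated by $\aval M$ so that $\gamma_j=\widetilde\gamma+M_j\geq 0$, and then refine each simple (cycle) move into an ordered chain of basic moves whose positive parts always sit on cells currently carrying mass at least $\alpha_j$. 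Both proofs are valid; yours is more self-contained at this point of the paper (it needs no canonical intermediate coupling and yields a direct path between the two given couplings, with an explicit bound on the number of moves in terms of $\#\suppof M$), while the paper's detour through $\gamma_H$ buys reuse of the homophily machinery, which is exactly what is lifted to the tri-variate setting in \Cref{sec:three}. One point worth making explicit in your write-up: the final ordering argument tacitly uses that the cycles in $\suppof{F_j}$ are cell-disjoint on their positive parts (which holds because $F_j$ takes values in $\set{-1,0,1}$), so no positive cell is debited by $\alpha_j$ more than once; with that remark added, the proof is complete.
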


Noticing that the $H$-coupling is unique in $\mathcal{P}(\mu,\nu)$, the proof of the theorem rests on the following proposition.

\begin{proposition} \label{pr:homo2}
Given a coupling $\gamma \in \mathcal{P}(\mu,\nu)$, there exist a sequence of basic moves $M_1, \ldots, M_k$ and a sequence of real positive numbers $\alpha_1, \ldots, \alpha_k$ such that
\begin{equation*}
    \gamma_H(\mu,\nu) = \gamma - \sum_{i=1}^k \alpha_i M_i
\end{equation*}
and
\begin{equation*}
    \gamma - \sum_{i=1}^{\overline{k}} \alpha_i M_i \in {\mathcal{P}(\mu,\nu)}
\end{equation*}
for all $\overline{k}=1,\ldots,k$.
\end{proposition}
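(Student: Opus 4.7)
The plan is to construct the moves explicitly by walking through the cells of the table in lexicographic order and using, at each step, a basic move to push the running coupling $\gamma^{(t)}$ toward $\gamma_H(\mu,\nu)$ at the current cell. Set $\gamma^{(0)}=\gamma$ and take as inductive hypothesis, on entering cell $(i,j)$, that $\gamma^{(t)}(i',j')=\gamma_H(i',j')$ for every $(i',j')$ lexicographically preceding $(i,j)$.

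Because every basic move preserves margins, the two quantities
\begin{equation*}
R_i=\mu(i)-\sum_{k<j}\gamma_H(i,k)\qquad\text{and}\qquad C_j=\nu(j)-\sum_{h<i}\gamma_H(h,j)
\end{equation*}
coincide, at this stage, with the residual row-sum of $\gamma^{(t)}$ in row $i$ over columns $\geq j$ and the residual column-sum in column $j$ over rows $\geq i$. By \Cref{prophomo}, $\gamma_H(i,j)=\min(R_i,C_j)$, and from $\gamma^{(t)}(i,j)\leq R_i$ and $\gamma^{(t)}(i,j)\leq C_j$ we deduce the central monotonicity $\gamma^{(t)}(i,j)\leq\gamma_H(i,j)$. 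Only increases at $(i,j)$ are ever needed, which reduces the task to a one-sided matching argument.

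When the gap $\gamma_H(i,j)-\gamma^{(t)}(i,j)$ is strictly positive, mass conservation in row $i$ and column $j$ forces the existence of $j'>j$ with $\gamma^{(t)}(i,j')>0$ and of $i'>i$ with $\gamma^{(t)}(i',j)>0$. The basic move $M=(\delta_i-\delta_{i'})\otimes(\delta_{j'}-\delta_j)$ then satisfies $\{M>0\}=\{(i,j'),(i',j)\}\subseteq\suppof{\gamma^{(t)}}$ and is therefore admissible. Choosing $\alpha=\min\bigl\{\gamma^{(t)}(i,j'),\gamma^{(t)}(i',j),\gamma_H(i,j)-\gamma^{(t)}(i,j)\bigr\}>0$, the update $\gamma^{(t+1)}=\gamma^{(t)}-\alpha M$ adds $\alpha$ at $(i,j)$ and $(i',j')$, subtracts $\alpha$ at $(i,j')$ and $(i',j)$, lies in $\jointof{\mu}{\nu}$, and either closes the gap at $(i,j)$ or deletes at least one cell from the support. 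The inner loop therefore terminates in at most $n^2$ iterations, and since all four cells touched are lex-weakly after $(i,j)$, no previously fixed entry is disturbed. Iterating over the $n^2$ cells produces the desired finite sequences $(M_\ell)$ and $(\alpha_\ell)$.

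The main obstacle I expect is keeping the bookkeeping straight: one must verify that the residual marginals $R_i,C_j$ really remain constant throughout an inner loop (which follows from margin-preservation combined with the placement of corners $i'>i$, $j'>j$), and that the invariant $\gamma^{(t)}(i,j)\leq\gamma_H(i,j)$ propagates across the outer loop. Both reduce to the observation that every move used while fixing $(i,j)$ lives in the \emph{south-east} sub-block of the current cell, so nothing to the north-west of $(i,j)$ ever changes once it has been matched with $\gamma_H$.
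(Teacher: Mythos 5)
Your proposal is correct and follows essentially the same route as the paper's proof: a lexicographic scan of the cells, with a basic move whose two negative corners lie at $(i,j')$, $j'>j$, and $(i',j)$, $i'>i$, applied repeatedly until the current cell matches $\gamma_H(i,j)$, and termination guaranteed because each move either closes the gap or zeroes an entry in row $i$ or column $j$. The only difference is cosmetic: you make explicit the invariant $\gamma^{(t)}(i,j)\leq\gamma_H(i,j)$ and the constancy of the residual margins, which the paper leaves implicit.
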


\begin{proof}
We scan the table $\gamma$ from $(1,1)$ to $(1,n)$ in the first row, then from $(2,1)$ to $(2,n)$ in the second row and so on.

Let us consider the probability $\gamma(i,j)$. If
\begin{equation}
    \gamma(i,j) < \min\left\{\mu(i) - \sum_{k<j} \gamma(i,k), \ \nu(j)-\sum_{h<i}\gamma(h,j) \right\}
\end{equation}
then there exist indices $i_1>i$ and $j_1>j$ such that
\begin{equation*}
\gamma(i,j_1)>0 \qquad \gamma(i_1,j)>0 \, .
\end{equation*}
Thus we can apply the basic move $M_{i,i_1,j,j_1}$ with $+1$ in $(i,j_1)$ and $(i_1,j)$, and $-1$ in $(i,j)$ and $(i_1,j_1)$. Let $\alpha_{i,i_1,j,j_1} = \min\{ \gamma_{i,j_1},\gamma_{i_1,j} \}$ and we move from $\gamma$ to $\gamma-\alpha_{i,i_1,j,j_1}M_{i,i_1,j,j_1}$.

Notice that for a given $(i,j)$ only a finite number of moves can be applied since at each step one probability in the $i$-th row or in the $j$-th column goes to zero, and therefore the procedure ends in a finite number of steps.
\end{proof}

In the following remark we show that the Euclidean distance in ${\mathbb R}$ is a typical case where the optimal coupling is not unique.

\begin{remark}
Let us consider the following couplings with $\mu=(0.1,0.25,0.25,0.4)$ and $\nu=(0.5,0.2,0.2,0.1)$.
\begin{equation*}
\gamma_H=
\begin{pmatrix}
  0 & 0 & 0 & 0  \\
0.25 & 0 & 0 & 0   \\
0.25 & 0 & 0 & 0   \\
  0 & 0.25 & 0.25 & 0
\end{pmatrix}\quad
\gamma_1=
\begin{pmatrix}
  0 & 0 & 0 & 0 \\
0 & 0.25& 0 & 0   \\
0.25 & 0 & 0 & 0  \\
0.25 & 0 & 0.25 & 0
\end{pmatrix}
\end{equation*}
\begin{equation*}
\gamma_D=
\begin{pmatrix}
  0 & 0 & 0 & 0 \\
0 & 0.25& 0 & 0  \\
0 & 0 & 0.25 & 0   \\
0.50 & 0 & 0& 0
\end{pmatrix}
 \end{equation*}
We observe that if the ground set is $X=\{1,2,3,4\}$ with the Euclidean distance $d(i,j)=|i-j|$, then all the three couplings have the same c-cost, namely $c(\gamma)=1.5$, which is also equal to the Kantorovich distance. Although this example is rather special, because it has one row and one column with zero probability, nevertheless it allows us to show an example with several couplings sharing the same c-cost by means of small tables.

Notice that the coupling $\gamma_H$ is the coupling of maximum homophily, while the coupling $\gamma_D$ has the highest possible concentration on the main diagonal.

Moreover, all the mixtures of the three previous couplings have again $c(\gamma)=1.5$, showing that the set of the optimal couplings is a face of the polytope. This derives from the fact that with $d(i,j)=|i-j|$ the basic moves involving one diagonal cell, namely of the form $M_{i_1,i_2,i_2,j_2}$, with $i_1<i_2<j_2$, have a null Kantorovich value.
\end{remark}

The following proposition highlights an interesting connection between the discrete and the continuous frameworks for the case of the Euclidean distance. In the discrete case the optimality of the $H$-table follows from previous results, and the optimality in the continuous case is derived.

\begin{proposition}
Given any pair of non-decreasing real sequences $(x_i)_{i=1}^N$, $(y_i)_{i=1}^N$, with sample marginal distributions $\mu_N$ and $\nu_N$, respectively, the homophily coupling $\gamma_H$ coincides with the distribution of $(x_i,y_i)_{i=1}^N$ and hence it minimizes
\begin{equation*}
  \sum_{i,j \in X} \aval{x_i-y_j} \gamma(i,j)
\end{equation*}
among all couplings in $\mathcal P(\mu_N,\nu_N)$. In general, given
any pair of discrete probability functions $\mu$ and $\nu$, $\gamma_H(\mu,\nu)$ is optimal for the Euclidean distance in ${\mathbb R}$.
\end{proposition}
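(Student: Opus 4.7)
The plan is to break the proof into three steps: first, identify $\gamma_H(\mu_N,\nu_N)$ with the empirical joint distribution of the pairs $(x_i,y_i)$; second, deduce optimality in the rational/sample setting from Gini's theorem; third, extend to arbitrary discrete margins by continuity.

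For the first step, I would show that the empirical joint probability function $\eta(i,j) = \#\set{t : x_t = i, y_t = j}/N$ satisfies the lexicographic recursion \Cref{eq:homo} defining $\gamma_H(\mu_N,\nu_N)$, and conclude by the uniqueness in \Cref{prophomo}. The key combinatorial observation (already exploited in the proof of \Cref{pr:homosampling}) is that, because both sequences are non-decreasing, one cannot have simultaneously $\eta(i,k)>0$ with $k>j$ and $\eta(h,j)>0$ with $h>i$: the first would give an index $t_1$ with $x_{t_1}=i$, $y_{t_1}=k$, the second an index $t_2$ with $x_{t_2}=h$, $y_{t_2}=j$, and monotonicity of $x$ would force $t_1<t_2$ while monotonicity of $y$ would force $t_1>t_2$, a contradiction. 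Hence at each cell the residual of either the current row or the current column has been exhausted, which forces $\eta(i,j)$ to equal the minimum appearing on the right-hand side of \Cref{eq:homo}.

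For the second step, I would appeal directly to \Cref{prop:gini14}. Any coupling $\gamma\in\mathcal P(\mu_N,\nu_N)$ with rational entries of common denominator $N$ is the joint distribution of some rematching $(x_{\sigma'(i)},y_{\sigma''(i)})_{i=1}^N$, and \Cref{prop:gini14} identifies the co-monotone pairing, which is exactly the identity pairing of the already-sorted sequences, as the one that minimizes $\sum_{i=1}^N \aval{x_{\sigma'(i)}-y_{\sigma''(i)}}$. Combined with Step 1, this says $\gamma_H(\mu_N,\nu_N)$ minimizes $\sum_{i,j\in X}\aval{x_i-y_j}\,\gamma(i,j)$ over $\mathcal P(\mu_N,\nu_N)$.

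For the third step, I would approximate arbitrary discrete $\mu,\nu\in\Delta(X)$ by sequences of rational probability functions $\mu^{(m)}\to\mu$, $\nu^{(m)}\to\nu$ in $\reals^X\oplus\reals^X$. The right-hand side of the recursion \Cref{eq:homo} is built from sums, subtractions, and minima of the margin entries and the previously computed entries, so $(\mu,\nu)\mapsto\gamma_H(\mu,\nu)$ is continuous, giving $\gamma_H(\mu^{(m)},\nu^{(m)})\to\gamma_H(\mu,\nu)$; the cost functional $\gamma\mapsto\sum_{i,j}\aval{i-j}\,\gamma(i,j)$ is linear, hence continuous. By Step 2, $c(\gamma_H(\mu^{(m)},\nu^{(m)}))=K_c(\mu^{(m)},\nu^{(m)})$ for every $m$, and passing to the limit on both sides, using \Cref{prop:continuity} on the right, yields $c(\gamma_H(\mu,\nu))=K_c(\mu,\nu)$, which is the desired optimality.

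The main technical obstacle I expect is Step 1, namely the clean identification of the empirical pair distribution with the North-West recursion, since it requires handling the possibility of ties and of many indices mapping to the same cell. Once this is established, the remaining two steps are a direct invocation of Gini's result and a routine continuity/density argument underwritten by \Cref{prop:continuity}.
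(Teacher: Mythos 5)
Your proposal is correct and follows essentially the same route as the paper, whose entire proof consists of citing Proposition~\ref{pr:homosampling} for the first part and the continuity of $(\mu,\nu)\mapsto K_c(\mu,\nu)$ (Proposition~\ref{prop:continuity}) for the second; your three steps simply unpack those two citations together with Gini's theorem (Proposition~\ref{prop:gini14}). The only point that both you and the paper leave implicit is the passage from optimality among the permutation couplings covered by Proposition~\ref{prop:gini14} to optimality over all of $\mathcal P(\mu_N,\nu_N)$, which requires either the integrality of the vertices of the transportation polytope or a density argument over denominators $kN$, $k\to\infty$.
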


\begin{proof}
The first part follows directly from Proposition
\ref{pr:homosampling}. The second part follows from the continuity of
$(\mu,\nu) \mapsto K_c(\mu,\nu)$, see \Cref{prop:continuity}.
\end{proof}

The following result shows that the directed forest $\suppof{\widetilde\gamma}$ generically contains all loops, that is, vertices for which $\widetilde\gamma(x,x) > 0$.

\begin{proposition}
\label{prop:extreme-nocycles}\
Assume $\mu(\overline x)\nu(\overline x) > 0$ for some $\overline x \in X$. If $\widetilde \gamma$ is an optimal coupling with $\widetilde\gamma(\overline x,\overline x) = 0$, there exists an optimal coupling $\overline\gamma$ with $\overline\gamma(\overline x,\overline x) > 0$ and $\overline\gamma(x,x) = \widetilde\gamma(x,x)$ for $x \neq \overline x$.
\end{proposition}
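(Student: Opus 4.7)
My plan is to construct $\overline\gamma$ from $\widetilde\gamma$ by subtracting a single basic move that transfers a positive amount of mass into the loop $(\overline x, \overline x)$ while leaving every diagonal cell $(z,z)$ with $z \neq \overline x$ undisturbed. The marginal conditions $\mu(\overline x),\nu(\overline x) > 0$ together with $\widetilde\gamma(\overline x,\overline x) = 0$ guarantee that the sets
\[
A \;=\; \{x \neq \overline x : \widetilde\gamma(x,\overline x) > 0\}, \qquad B \;=\; \{y \neq \overline x : \widetilde\gamma(\overline x,y) > 0\}
\]
are both nonempty. The key preliminary step, which genuinely uses the optimality hypothesis, is to show that one can choose $x \in A$ and $y \in B$ with $x \neq y$.

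Suppose, for contradiction, that $A = B = \{x_0\}$ for a single $x_0 \neq \overline x$; then the marginal equations force $\widetilde\gamma(x_0,\overline x) = \nu(\overline x)$ and $\widetilde\gamma(\overline x,x_0) = \mu(\overline x)$. Applying the admissible basic move $M_0 = (\delta_{x_0}-\delta_{\overline x})\otimes(\delta_{\overline x}-\delta_{x_0})$ with coefficient $\alpha_0 = \min\{\mu(\overline x),\nu(\overline x)\} > 0$ produces a coupling whose cost is $c(\widetilde\gamma) - \alpha_0\bigl(d(x_0,\overline x) + d(\overline x,x_0)\bigr)$, strictly less than $c(\widetilde\gamma)$ because $d(x_0,x_0) = d(\overline x,\overline x) = 0$ and $x_0 \neq \overline x$. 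This contradicts optimality, so the pair $x \neq y$ must exist.

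Fix such a pair and set $M = (\delta_x-\delta_{\overline x})\otimes(\delta_{\overline x}-\delta_y)$, which takes the value $+1$ at $(x,\overline x)$ and $(\overline x,y)$, and $-1$ at $(x,y)$ and $(\overline x,\overline x)$. Define $\overline\gamma = \widetilde\gamma - \alpha M$ with $\alpha = \min\{\widetilde\gamma(x,\overline x),\widetilde\gamma(\overline x,y)\} > 0$. Feasibility $\overline\gamma \in \mathcal P(\mu,\nu)$ follows from the choice of $\alpha$; the cost of $\overline\gamma$ equals $c(\widetilde\gamma) - \alpha\bigl(d(x,\overline x)+d(\overline x,y)-d(x,y)\bigr)$, and the triangle inequality forces the parenthesized quantity to be non-negative, while optimality of $\widetilde\gamma$ forces it to be zero, so $\overline\gamma$ is optimal. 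By construction $\overline\gamma(\overline x,\overline x) = \alpha > 0$, and the four cells perturbed by $M$ are $(x,\overline x),(\overline x,y),(x,y),(\overline x,\overline x)$, none of which is of the form $(z,z)$ with $z \neq \overline x$: this is where the condition $x \neq y$ is crucial. The main obstacle is securing an off-diagonal pivot $(x,y)$, and the optimality hypothesis is precisely what eliminates the sole configuration that would prevent this.
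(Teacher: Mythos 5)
Your proof is correct and follows essentially the same route as the paper's: the same four-cell move $\delta_x\otimes\delta_{\overline x}+\delta_{\overline x}\otimes\delta_y-\delta_x\otimes\delta_y-\delta_{\overline x}\otimes\delta_{\overline x}$, admissibility coming from positive entries in row $\overline x$ and column $\overline x$, and the triangle inequality combined with optimality forcing $d(x,\overline x)+d(\overline x,y)=d(x,y)$. The only genuine addition is your preliminary step ruling out $A=B=\{x_0\}$: the paper simply names two points with $\widetilde\gamma(1,2)\widetilde\gamma(3,1)>0$ and tacitly assumes they are distinct, whereas you show the coincident configuration would itself contradict optimality (the cost would drop by $2\alpha_0 d(x_0,\overline x)>0$), so your write-up closes a small gap the paper leaves implicit.
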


\begin{proof}
  Assume $\widetilde \gamma$ is optimal and that for a vertex, say 1, it holds $\widetilde \gamma(1,1) = 0$. Since $\mu(1)\nu(1) > 0$, there exist points, say 2 and 3, for which $\widetilde \gamma(1,2) \widetilde \gamma(3,1) > 0$. Pick up the move
  \begin{equation*}
   M =  \delta_1 \otimes \delta_2 + \delta_3 \otimes \delta_1 - \delta_3 \otimes \delta_2 - \delta_1 \otimes \delta_1
  \end{equation*}
  as well as any number $\alpha \in (0, \min\{  \widetilde \gamma(1,2), \widetilde \gamma(3,1)\}]$.

  It is easily checked that the function $\gamma_\alpha = \widetilde \gamma - \alpha M \in \mathcal{P(\mu,\nu)}$ whose value is
  \begin{equation*}
    d(\mu,\nu) - \alpha (d(1,2) + d(3,1) - d(3,2)) \ ,
  \end{equation*}
  and where $d(1,2) + d(3,1) - d(3,2) \geq 0$ is true by the triangle inequality. The equality must hold, otherwise the value would be strictly smaller than the K-distance. In conclusion, $\gamma_\alpha$ is an optimal coupling with $\gamma_\alpha(1,1) > 0$ and with all the other diagonal elements equal to those of the original $\widetilde \gamma$.
\end{proof}

\begin{remark}
By repeating the previous argument, we can show that in the case of $\mu$ and $\nu$ with full support there exists an optimal solution with positive diagonal elements. It should be noticed from the necessary equality $d(1,2) + d(3,1) = d(3,2)$ that solutions with zero elements on the diagonal are not generic.
\end{remark}

\begin{remark}
Notice that the previous proposition is no longer true if we replace a distance with a dissimilarity. Let us consider for example the probability functions $\mu=(0.5,0.2,0.3)$ and $\nu =(0.3,0.2,0.5)$. Moreover, let $X$ be equipped with the following dissimilarity matrix
\begin{equation*}
\begin{pmatrix}
0 & 1 & 5 \\
1 & 0 & 1 \\
5 & 1 & 0
\end{pmatrix}
\end{equation*}
An optimal $\gamma$ is
\begin{equation*}
\gamma_1=\begin{pmatrix}
0.3 & 0.2 & 0 \\
0 & 0 & 0.2 \\
0 & 0 & 0.3
\end{pmatrix}
\end{equation*}
with one null diagonal entry. If we apply a basic move in order to fill in the second diagonal element, we obtain the coupling
\begin{equation*}
\gamma_2=\begin{pmatrix}
0.3 & 0 & 0.2 \\
0 & 0.2 & 0 \\
0 & 0 & 0.3
\end{pmatrix}
\end{equation*}
which is not optimal.
\end{remark}

Next proposition asserts that the support of an optimal coupling
  is generically a connected graph. A detailed study how the support
  of an optimal coupling depends on the given distance has
  been made in \cite{montrucchio|pistone:2019-arXiv:1905.07547v5}.

\begin{proposition}\label{prop:partitions}
If the support of the optimal coupling $\widetilde \gamma$ is a disconnected graph, with connected components $(X_i,\mathcal S_i)$, $i =1,\dots,k$, then $\mu(X_i) = \nu(X_i)$ for all $i = 1,\dots,k$ and $\widetilde \gamma = \sum_{i=1}^k \gamma_i$, where each $\gamma_i$ is supported by $X_i \times X_i$ and is proportional to an optimal coupling for the conditional margins, $\left.\mu\right|_{X_i}$ and $\left.\nu\right|_{X_i}$.
\end{proposition}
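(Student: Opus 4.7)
\medskip

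\noindent\textbf{Proof proposal.} The plan is to observe that a disconnection of the support forces an additivity of $\widetilde\gamma$ over the blocks, derive the equality of marginals on each block from pure bookkeeping, and then deduce optimality of each block by a swap-and-contradict argument using an alternative block coupling.

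First I would note that by definition of the connected components of $\suppof{\widetilde\gamma}$ as an undirected graph on the vertex set $X$, any edge $(x,y) \in \suppof{\widetilde\gamma}$ has both endpoints in the same component $X_i$. Hence $\suppof{\widetilde\gamma} \subset \bigsqcup_{i=1}^k (X_i \times X_i)$, and setting $\gamma_i(x,y) = \widetilde\gamma(x,y) \one_{X_i \times X_i}(x,y)$ gives the additive decomposition $\widetilde\gamma = \sum_{i=1}^k \gamma_i$ with $\suppof{\gamma_i} \subset X_i \times X_i$.

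Next, I would compute the marginals block-wise. For each $i$,
\begin{equation*}
\mu(X_i) = \sum_{x \in X_i} \sum_{y \in X} \widetilde\gamma(x,y) = \sum_{x \in X_i}\sum_{y \in X_i} \gamma_i(x,y) \ ,
\end{equation*}
since contributions from $y \notin X_i$ vanish, and similarly $\nu(X_i) = \sum_{y \in X_i}\sum_{x \in X_i} \gamma_i(x,y)$. Both equal the total mass $m_i := \sum_{x,y \in X_i} \gamma_i(x,y)$, so $\mu(X_i) = \nu(X_i) = m_i$. If some $m_i = 0$ the block is trivial; otherwise define the normalized conditional probability functions $\mu_i = m_i^{-1}\left.\mu\right|_{X_i}$ and $\nu_i = m_i^{-1}\left.\nu\right|_{X_i}$ on $X_i$, and note that $m_i^{-1}\gamma_i \in \jointof{\mu_i}{\nu_i}$.

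Finally, to establish optimality of $m_i^{-1}\gamma_i$ for the restricted problem with cost $\left.c\right|_{X_i \times X_i}$, I would argue by contradiction: if some $\gamma_i^* \in \jointof{\mu_i}{\nu_i}$ had strictly smaller cost than $m_i^{-1}\gamma_i$, then
\begin{equation*}
\widetilde\gamma^* = m_i \gamma_i^* + \sum_{j \neq i} \gamma_j
\end{equation*}
would lie in $\jointof{\mu}{\nu}$ (the marginals add up correctly block by block, since $\gamma_i^*$ is supported on $X_i \times X_i$ with the same block marginals as $\gamma_i$) and would satisfy $c(\widetilde\gamma^*) < c(\widetilde\gamma)$, contradicting optimality of $\widetilde\gamma$. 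The main technical step to get right is just this substitution lemma, namely that replacing one block of a coupling by another coupling having the same block marginals yields again an element of $\jointof{\mu}{\nu}$; this is immediate from the disjointness of the $X_i$ and the additive decomposition already established. No subtle analytic obstacle appears, and no appeal beyond the definition of a coupling and linearity of the cost is needed.
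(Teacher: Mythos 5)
Your proposal is correct and follows essentially the same route as the paper: decompose $\widetilde\gamma$ block-wise along the connected components, read off $\mu(X_i)=\nu(X_i)$ from the marginal sums, and conclude optimality of each normalized block via the cost decomposition. In fact you make explicit the substitution step (replacing one block by a cheaper coupling with the same block marginals to contradict global optimality) that the paper only asserts with ``such couplings are necessarily optimal,'' and you also handle the degenerate case $m_i=0$, so your write-up is, if anything, slightly more complete.
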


\begin{proof} Without restriction of generality, we consider the
    case $k=2$. Assume the supporting graph of $\widetilde \gamma$ has
    components $(X_1,\mathcal S_1)$ and $(X_2,\mathcal S_2)$. This means that $\widetilde\gamma(x,y) = 0$ unless
    $x$ and $y$ belong both to $X_1$ or both to $X_2$. It follows
    that
  \begin{equation*}
 \mu(X_1) = \sum_{x_1 \in X_1}\mu(x_1) = \sum_{x_1, y_1 \in X_1} \widetilde \gamma(x_1,y_1) = \sum_{y_1 \in X_1} \nu(y_1) = \nu(X_1) \ ,
\end{equation*}
and, for the same reason,
$\mu(X_2) = \sum_{x_2, y_2 \in X_2} \widetilde \gamma(x_2,y_2)=
\nu(X_2)$. Now, the K-distance takes the conditional form
\begin{multline*} K_d(\mu,\nu) =
  \sum_{x_1,y_1 \in X_1} d(x_1,y_1) \widetilde \gamma(x_1,y_1) +
  \sum_{x_2,y_2 \in X_2} d(x_2,y_2) \widetilde \gamma(x_2,y_2) = \\
  \sum_{x_1,y_1 \in X_1} \widetilde \gamma(x_1,y_1) \sum_{x_1,y_1 \in
    X_1} d(x_1,y_1) \left.\widetilde \gamma\right|_{X_1\times
    X_1}(x_1,y_1) \ + \\ \sum_{x_2,y_2 \in X_2}  \widetilde \gamma(x_2,y_2) \sum_{x_2,y_2 \in X_2} d(x_2,y_2) \left.\widetilde \gamma\right|_{X_2\times X_2}(x_2,y_2) \ .
\end{multline*}
Each of the conditioned couplings $\left.\widetilde
  \gamma\right|_{X_i\times X_i}$, $i=1,2$ is a coupling of the
conditioned margins $\left.\mu\right|_{X_i}$ and
$\left.\nu\right|_{X_i}$, and such couplings are necessarily optimal.
\end{proof}

\section{Multidimensional extension} \label{sec:three}

In this section we extend the results in Proposition \ref{prophomo} to the case of joint probability functions with three given margins.

\begin{proposition} \label{prophomo3}
Given three probability functions $\mu$, $\nu$, and $\zeta$, the joint probability function $\gamma_H$ such that
\begin{multline}
\label{eq:homo3}
    \gamma_H(i,j,k) = \min\left\{\mu(i) - \sum_{(\overline j,\overline k) \prec (j,k)} \gamma_H(i,\overline j, \overline k), \right. \\ \left. \nu(j)- \sum_{(\overline i,\overline k) \prec (i,k)} \gamma_H(\overline i, j, \overline k) , \ \zeta(k) - \sum_{(\overline i,\overline j) \prec (i,j)} \gamma_H(\overline i,\overline j, k)\right\}
\end{multline}
is well defined, and it is unique. We name this joint probability function as the joint probability function of maximal homophily. In \Cref{eq:homo3} the sign $\prec$ is to be read in lexicographic order, e.g., $(\overline j,\overline k)\prec (j,k)$ if and only if either $\overline j<j$ or $\overline j=j \mbox{ and } \overline k<k$.
\end{proposition}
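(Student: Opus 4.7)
The recursion \eqref{eq:homo3} is well-defined and produces a unique function, since every index $(i,\overline j,\overline k)$, $(\overline i,j,\overline k)$, or $(\overline i,\overline j,k)$ appearing on its right strictly precedes $(i,j,k)$ in the lexicographic order on $X^3$. I would then introduce the three running residuals
\begin{align*}
R_\mu(i;j,k) &= \mu(i) - \sum_{(\overline j,\overline k) \prec (j,k)} \gamma_H(i,\overline j,\overline k),\\
R_\nu(j;i,k) &= \nu(j) - \sum_{(\overline i,\overline k) \prec (i,k)} \gamma_H(\overline i,j,\overline k),\\
R_\zeta(k;i,j) &= \zeta(k) - \sum_{(\overline i,\overline j) \prec (i,j)} \gamma_H(\overline i,\overline j,k),
\end{align*}
so that $\gamma_H(i,j,k) = \min\{R_\mu(i;j,k),\, R_\nu(j;i,k),\, R_\zeta(k;i,j)\}$. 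For each fixed first argument, the residual is non-increasing along the lex order on its position variable, because successive steps only subtract non-negative $\gamma_H$ values; starting from the non-negative margin value it stays $\geq 0$ throughout, and consequently $\gamma_H \geq 0$ everywhere.

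The core step is the verification of the marginal identities. Write $S_\mu(i) = \sum_{j,k} \gamma_H(i,j,k)$, $S_\nu(j) = \sum_{i,k} \gamma_H(i,j,k)$, $S_\zeta(k) = \sum_{i,j} \gamma_H(i,j,k)$, and set the deficits $\alpha_i = \mu(i) - S_\mu(i)$, $\beta_j = \nu(j) - S_\nu(j)$, $\eta_k = \zeta(k) - S_\zeta(k)$; all three are non-negative, as each final running residual bounds the corresponding deficit from below. Denoting $T = \sum_{i,j,k} \gamma_H(i,j,k)$, one has $\sum_i \alpha_i = \sum_j \beta_j = \sum_k \eta_k = 1 - T$, so it suffices to prove $T = 1$. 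I argue by contradiction: assume $T < 1$. Then each of the three deficit vectors has strictly positive total, so one can pick indices $i^*, j^*, k^*$ with $\alpha_{i^*}, \beta_{j^*}, \eta_{k^*}$ simultaneously strictly positive. At step $(i^*,j^*,k^*)$ each running residual is bounded below by the corresponding deficit and is therefore strictly positive, giving $\gamma_H(i^*,j^*,k^*) > 0$. Since this value equals the minimum of the three residuals, at least one of them---say $R_\mu(i^*;j^*,k^*)$---is entirely consumed at this step, and monotonicity together with non-negativity of $R_\mu(i^*;\cdot,\cdot)$ forces $\gamma_H(i^*,j,k) = 0$ for every $(j,k) \succ (j^*,k^*)$; hence $S_\mu(i^*) = \mu(i^*)$, contradicting $\alpha_{i^*} > 0$. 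The two remaining cases, where the binding residual is $R_\nu$ or $R_\zeta$, are symmetric.

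The main obstacle is precisely this last contradiction step. In the bivariate setting the analogous statement can be hidden inside a straightforward row-by-row reduction, since exhausting the current row's budget cleanly separates it from the rest of the table. In three dimensions, however, the slice whose budget runs out at a given step may be of any of the three kinds and the exhaustions need not occur in any fixed order; choosing a single witness triple $(i^*,j^*,k^*)$ in the common positive support of all three deficit vectors---as above---sidesteps having to trace the exhaustion pattern explicitly.
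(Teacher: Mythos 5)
Your argument is correct, but it proceeds along a genuinely different route from the paper's. The paper proves the proposition by exhibiting the tri-variate recursion as a \emph{lifting} of the bi-variate construction: one first forms the bi-variate homophily coupling $\gamma_H(i,j)$ of $\mu$ and $\nu$, regards it as a single probability function on the lexicographically ordered index set $\set{(i,j)}$, and then applies the bi-variate North-West recursion of \Cref{eq:homo} to the pair $\bigl(\gamma_H(i,j),\zeta\bigr)$; a chain of $\min$-identities shows that the resulting table satisfies exactly \Cref{eq:homo3}, so well-definedness, uniqueness, and the marginal identities are inherited from \Cref{prophomo}. You instead give a direct, self-contained proof: non-negativity of $\gamma_H$ and of the three running residuals by induction along the lexicographic scan, followed by the global counting identity $\sum_i\alpha_i=\sum_j\beta_j=\sum_k\eta_k=1-T$ and the contradiction obtained by evaluating the recursion at a triple $(i^*,j^*,k^*)$ chosen independently in the positive supports of the three deficit vectors --- the key observation being that each running residual at any cell dominates the corresponding final deficit, so all three residuals at $(i^*,j^*,k^*)$ are strictly positive, while the binding one is exhausted there and forces the corresponding slice sum to equal its margin. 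Each approach buys something: the paper's lifting is shorter given \Cref{prophomo} and makes visible the structural fact that the tri-variate $H$-table is the iterated North-West construction (which the paper then reuses in the connectivity theorem); your argument does not presuppose the bi-variate result, specializes to give an alternative proof of \Cref{prophomo}, and extends verbatim to any number of margins. One presentational point: when you conclude that exhaustion of $R_\mu(i^*;\cdot,\cdot)$ forces $S_\mu(i^*)=\mu(i^*)$, it is worth stating explicitly that the cells $(\overline i,j^*,\overline k)$ and $(\overline i,\overline j,k^*)$ entering $R_\nu$ and $R_\zeta$ all precede $(i^*,j^*,k^*)$ in the full lexicographic order on triples, so that the three residuals are simultaneously available at that step; this is true but is the one place where the three partial orders must be checked against the single scanning order.
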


\begin{proof}
We prove that the definition in Equation \eqref{eq:homo3} is the lifting of the bi-variate $H$-coupling for $\mu$ and $\nu$ with respect to $\zeta$. In other words, the definition for three variables is iterative. Consider the coupling $\gamma_H(i,j)$ with the indices $(i,j)$ in lexicographic order, and build the table of maximal homophily for $\gamma_H(i,j)$ and $\zeta$. From \Cref{eq:homo}, we have
\begin{equation*}
\gamma_H(i,j,k) = \min\left\{ \zeta(k) - \sum_{(\overline i,\overline j ) \prec (i,j)} \gamma_H(\overline i, \overline j, k), \gamma_H(i,j) - \sum_{\overline k < k} \gamma_H(i,j,\overline k) \right\}.
\end{equation*}
We check that the expression above leads to \Cref{eq:homo3}.
\begin{multline*}
\gamma_H(i,j,k) = \min\left\{ \zeta(k) - \sum_{(\overline i,\overline j ) \prec (i,j)} \gamma_H(\overline i, \overline j, k),\right. \\ \left. \min\left\{\mu(i)-\sum_{\overline j<j} \gamma_H(i,\overline j), \nu(j) - \sum_{\overline i<i} \gamma_H(\overline i,j)\right\}- \sum_{\overline k < k} \gamma_H(i,j,\overline k)  \right\}
\end{multline*}
\begin{multline*}
= \min\left\{ \zeta(k) - \sum_{(\overline i,\overline j ) \prec (i,j)} \gamma_H(\overline i, \overline j, k), \right. \\ \left. \mu(i)-\sum_{\overline j<j} \gamma_H(i,\overline j) - \sum_{\overline k < k} \gamma_H(i,j,\overline k), \nu(j) - \sum_{\overline i<i} \gamma_H(\overline i,j)- \sum_{\overline k < k} \gamma_H(i,j,\overline k) \right\}
\end{multline*}
\begin{multline*}
= \min\left\{ \zeta(k) - \sum_{(\overline i,\overline j ) \prec (i,j)} \gamma_H(\overline i, \overline j, k),\right. \\ \left.\mu(i) - \sum_{(\overline j,\overline k) \prec (j,k)} \gamma_H(i,\overline j, \overline k),   \nu(j)- \sum_{(\overline i,\overline k) \prec (i,k)} \gamma_H(\overline i, j, \overline k) \right\}
\end{multline*}

\end{proof}

\begin{example}
Consider a joint sample distribution of three variables with mar\-ginal counts $4, 6, 2, 4$ for the first variable and $2, 11, 2, 1$ for the second variable, as  in Example \ref{ex:homo2-count}, and  $3,3,5,5$ for the third one.
The co-graduation of the three variables is
\begin{equation*}
  \begin{array}{c|cccccccccccccccc}
t & 1 & 2 & 3 & 4 & 5 & 6 & 7 & 8 & 9 & 10 & 11 & 12 & 13 & 14 & 15 & 16 \\
    \hline
    x & 1 & 1 & 1 & 1 & 2 & 2 & 2 & 2 & 2 & 2 & 3 & 3 & 4 & 4 & 4 & 4 \\
    y & 1 & 1 & 2 & 2 & 2 & 2 & 2 & 2 & 2 & 2 & 2 & 2 & 2 & 3 & 3 & 4 \\
    z & 1 & 1 & 1 & 2 & 2 & 2 & 3 & 3 & 3 & 3 & 3 & 4 & 4 & 4 & 4 & 4
\end{array}
\end{equation*}
The four slices of the table of maximal homophily are:
\begin{center}
\begin{tabular}{cccc}
$z=1$&$z=2$&$z=3$&$z=4$\\
$\begin{pmatrix}
  2 & 1 & 0 & 0 \\
  0 & 0 & 0 & 0 \\
  0 & 0 & 0 & 0 \\
  0 & 0 & 0 & 0
\end{pmatrix}$
&
$\begin{pmatrix}
  0 & 1 & 0 & 0 \\
  0 & 2 & 0 & 0 \\
  0 & 0 & 0 & 0 \\
  0 & 0 & 0 & 0
\end{pmatrix}$
&
$ \begin{pmatrix}
  0 & 0 & 0 & 0 \\
  0 & 4 & 0 & 0 \\
  0 & 1 & 0 & 0 \\
  0 & 0 & 0 & 0
\end{pmatrix}$
&
$\begin{pmatrix}
  0 & 0 & 0 & 0 \\
  0 & 0 & 0 & 0 \\
  0 & 1 & 0 & 0 \\
  0 & 1 & 2 & 1
\end{pmatrix}
$
\end{tabular}
\end{center}
\end{example}

\medskip

We now introduce the basic moves in the tri-variate case and we prove that they are enough to connect all joint probability functions, using the same arguments as in the bi-variate case. To ease the notation, we write only the indices and we omit the symbol $\delta$ when considering the moves.

There are two types of basic moves: in the first type the $+1$ have a common index, while in the second type the $+1$ have all different indices.

\begin{definition}
Consider indices $i \ne i'$, $j \ne j'$, $k \ne k'$. The tri-variate basic moves on $X \times X \times X$ are of two types:
\begin{itemize}
    \item[T1:] $+1$ in $(i,j,k)$ and in $(i,j',k')$, $-1$ in $(i,j',k)$ and in $(i,j,k')$, and similarly the second $+1$ in $(i',j,k')$ or in $(i',j',k)$;

    \item[T2:] $+1$ in $(i,j,k)$ and in $(i',j',k')$ and $-1$
    \begin{itemize}
        \item in $(i,j',k')$ and in $(i',j,k)$ or
        \item in $(i',j,k')$ and in $(i,j',k)$ or
        \item in $(i',j',k)$ and in $(i,j,k')$.
    \end{itemize}
\end{itemize}
\end{definition}

Two examples of basic moves are pictured in Figure \ref{fig:bm3}.

\begin{figure}
\centering
\includegraphics[width=.65\textwidth]{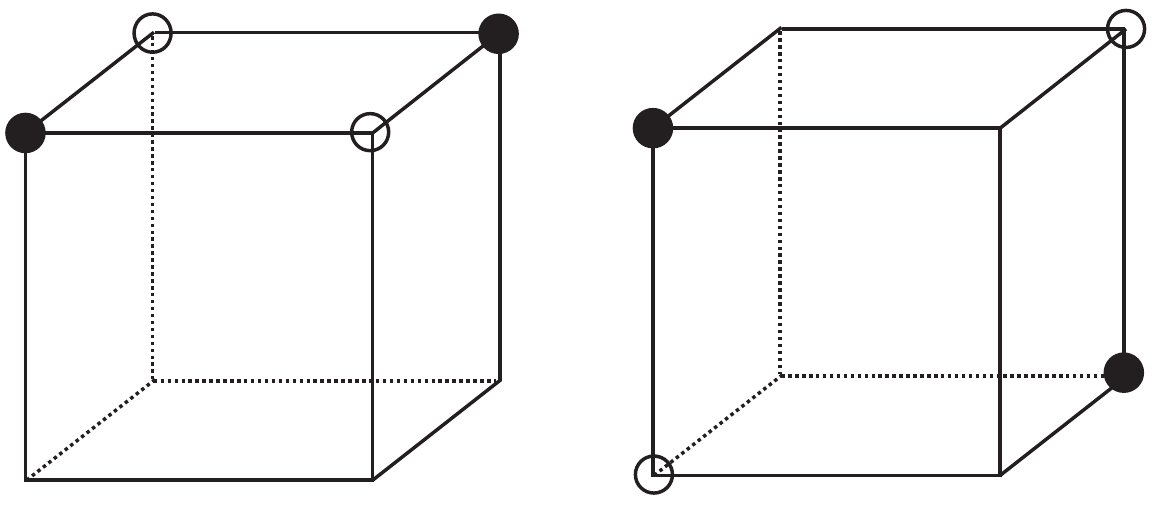}
\caption{Two basic moves in the tri-variate case. On the left, a move of the type T1, on the right a move of the type T2. Black circles correspond to $+1$, white circles to $-1$.\label{fig:bm3}}
\end{figure}

We are now ready to extend Theorem \ref{th:connected} to the tri-variate case.

\begin{theorem}
Given two tri-variate joint probability functions $\gamma, \widetilde{\gamma} \in {\mathcal{P}}(\mu,\nu,\zeta)$ there exist a sequence of tri-variate basic moves $M_1, \ldots, M_k$ and a sequence of real positive numbers $\alpha_1, \ldots, \alpha_k$ such that
\begin{equation*}
    \widetilde{\gamma} = \gamma - \sum_{i=1}^k \alpha_i M_i
\end{equation*}
and
\begin{equation*}
    \gamma - \sum_{i=1}^{\overline{k}} \alpha_i M_i \in {\mathcal{P}(\mu,\nu,\zeta)}
\end{equation*}
for all $\overline{k}=1,\ldots,k$.
\end{theorem}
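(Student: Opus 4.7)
The plan is to mirror the bi-variate argument: reduce the theorem to the statement that every coupling $\gamma\in\mathcal{P}(\mu,\nu,\zeta)$ can be driven to the unique maximal-homophily coupling $\gamma_H(\mu,\nu,\zeta)$ of \Cref{prophomo3} by a finite sequence of tri-variate basic moves (of types T1 and T2), with each intermediate table staying in $\mathcal{P}(\mu,\nu,\zeta)$. Granted this tri-variate analog of \Cref{pr:homo2}, the theorem is immediate: apply the procedure to both $\gamma$ and $\widetilde\gamma$, obtaining paths $\gamma\to\gamma_H$ and $\widetilde\gamma\to\gamma_H$; concatenating the first with the reverse of the second yields the required path. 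Reversing a subtraction $\gamma-\alpha M$ amounts to subtracting $\alpha(-M)$ from the new table, and $-M$ is again a basic move of the same type, so both positivity of coefficients and the basic-move property are preserved.

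To prove the key lemma, scan the cells $(i,j,k)$ in lexicographic order, and assume inductively that all strictly earlier cells already carry the value prescribed by $\gamma_H$. If $\gamma(i,j,k)<\gamma_H(i,j,k)$, then \Cref{eq:homo3} forces each of the three remaining-capacity quantities
\begin{equation*}
\mu(i)-\sum_{(j',k')\prec(j,k)}\gamma(i,j',k'),\quad \nu(j)-\sum_{(i',k')\prec(i,k)}\gamma(i',j,k'),\quad \zeta(k)-\sum_{(i',j')\prec(i,j)}\gamma(i',j',k)
\end{equation*}
to be strictly larger than $\gamma(i,j,k)$. Hence there exist indices $(j_1,k_1)\succ(j,k)$, $(i_2,k_2)\succ(i,k)$, and $(i_3,j_3)\succ(i,j)$ with $\gamma(i,j_1,k_1)>0$, $\gamma(i_2,j,k_2)>0$, and $\gamma(i_3,j_3,k)>0$. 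I then apply a basic move placing $-1$ at $(i,j,k)$ and $+1$'s at a suitable subset of these three positive later entries, with compensating entries chosen to zero out all three margins, using coefficient $\alpha$ equal to the minimum of the involved $\gamma$-values; this zeros out at least one later entry, preserves non-negativity of $\gamma-\alpha M$, and raises $\gamma(i,j,k)$.

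The main obstacle is organising the case analysis so that the chosen move is of type T1 or T2 and never touches an already-finalised cell. When the triple of later indices forms a cube $\{i,i'\}\times\{j,j'\}\times\{k,k'\}$, a single T2 move with $+1$ at $(i,j,k)$ and $(i',j',k')$ suffices. When two of the trios share a coordinate, the discrepancy lies within a single slab, and a T1 move (six cells in a common slab) does the job, since its restriction to that slab is a bi-variate basic move and can be invoked via \Cref{pr:homo2}. In the general position where the three later witnesses share no coordinate, one first performs one or two auxiliary T1 moves within a slab to align two of them, and then closes the gap with a T2. In every case the coefficient choice ensures that at each outer step either a later entry of $\gamma$ vanishes or $\gamma(i,j,k)$ is raised to its target value, so the procedure terminates in finitely many steps and transforms $\gamma$ into $\gamma_H(\mu,\nu,\zeta)$, completing the proof.
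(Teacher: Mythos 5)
Your overall architecture coincides with the paper's: reduce the theorem to the claim that any $\gamma\in\mathcal P(\mu,\nu,\zeta)$ can be driven to the unique maximal-homophily table of \Cref{prophomo3} by admissible basic moves and then concatenate the path for $\gamma$ with the reversed path for $\widetilde\gamma$; the lexicographic scan and the extraction of three strictly positive ``later'' witnesses $\gamma(i,j_1,k_1)$, $\gamma(i_2,j,k_2)$, $\gamma(i_3,j_3,k)$ from \Cref{eq:homo3} are also exactly the paper's starting point. The genuine gap sits in the one step that carries the whole proof: exhibiting the basic moves and verifying that every intermediate table remains non-negative. Your case analysis is schematic and partly wrong. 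A T1 move has four nonzero cells (all in one slab), not six. More seriously, in the general-position case you invoke ``one or two auxiliary T1 moves within a slab to align two of the witnesses'' without writing them down: any such move, applied as $\gamma-\alpha M$, decreases two cells of the slab, and only one of them can be taken to be a known-positive witness; nothing in your setup produces a second strictly positive cell in the required position, so admissibility is unverified. The appeal to \Cref{pr:homo2} ``restricted to a slab'' does not repair this, since a slab of $\gamma$ is not a coupling of $\mu$ and $\nu$ and that proposition concerns reaching the bivariate homophily table, not a single corrective move. Your termination claim also needs the (unstated) observation that the cells newly populated by the move lie outside the three slabs through $(i,j,k)$, so that the relevant count of positive witnesses actually decreases.

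The paper sidesteps all of this with one explicit six-cell integer move: $+1$ at the three witnesses and $-1$ at $(i,j,k)$, $(i_2,j_1,k_1)$, $(i_3,j_3,k_2)$. This has zero margins in all three directions and, applied with $\alpha$ equal to the minimum of the three witness values, decreases only cells known to be at least $\alpha$, while raising $\gamma(i,j,k)$ and annihilating at least one witness. It is then split into two explicit basic moves sharing a cancelling cell $(i_2,j,k)$ (each of type T1 or T2 up to sign); performing first the sub-move that deposits mass on the shared cell makes both intermediate tables admissible. If you wish to keep your route, you must write the alignment moves explicitly and prove they never drive an entry negative; as it stands, the construction cannot be checked.
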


\begin{proof}
We prove that from each joint probability function we can reach the maximal homophily by using basic moves, following the same strategy as in the proof of Theorem \ref{pr:homo2}.

If the condition in \Cref{eq:homo3} is not satisfied, then there is an entry $(i,j,k)$ such that
\begin{equation*}
    \gamma(i,j',k')>0, \ \gamma(i'',j,k'')>0, \ \gamma(i''',j''',k)>0
\end{equation*}
for suitable indices $i',i'' \ne i$; $j'',j''' \ne j$; $k',k''' \ne k$.

Let $\alpha= \min\{\gamma(i,j',k'), \gamma(i'',j,k''), \gamma(i''',j''',k) \}$.

Then, define the integer move $M$ with
\begin{itemize}
    \item $-1$ in $(i,j,k)$, $(i'',j',k')$ and $(i''',j''',k'')$;
    \item $+1$ in $(i,j',k')$, $(i'',j,k'')$ and $(i''',j''',k)$.
\end{itemize}
Such a move, applied with the coefficient $\alpha$ above, satisfies the condition in \Cref{eq:homo3} in the point $(i,j,k)$. The new points in $(i'',j',k')$ and $(i''',j''',k'')$ are lexicographically greater than $(i,j,k)$, so that scanning the joint probability function from $(1,1,1)$ lexicographically the procedure ends in a finite number of steps.

Finally, note that if the move $M$ lies in a slice (i.e., $i=i'=i''$ or $j=j'=j''$ or $k=k'=k''$) the move $M$ is a basic move since one $+1$ and one $-1$ coincide. In the other cases, the move $M$ can be decomposed into two basic moves:
\begin{itemize}
    \item $M_1$ with $-1$ in $(i,j,k)$ and $(i'',j',k')$,  $+1$ in $(i,j',k')$ and $(i'',j,k)$;
    \item $M_2$ with $-1$ in $(i'',j,k)$ and $(i''',j''',k'')$, $+1$ in $(i'',j,k'')$ and $(i''',j''',k)$.
\end{itemize}
\end{proof}

\section{Algorithm} \label{sec:algo}

The Simulated Annealing for continuous variables has been introduced in \cite{vanderbilt:84}, then optimized in several ways for special applications. In its basics, a Simulated Annealing algorithm seeks to find the minimum of a real function through a Markov chain whose stationary distribution is uniform on the set of the global minima. At each step, the Markov chain moves in a suitable set of neighbours and the transition probability is selected in order to have the desired stationary distribution. For further details, see \cite{henderson:2003}.

The basic moves introduced in the previous sections allow us to define the neighbours and to obtain a connected chain. Moreover, we exploit the special properties of the Kantorovich function, and through Proposition \ref{prop:extreme-nocycles} we perform one further optimization step.

The pseudo-code of the algorithm is given in Figure \ref{fig:alg}. To simplify the presentation, we write the algorithm in the case of two-dimensional joint probability functions, but it can be easily adapted to the three-dimensional case.

\begin{figure}
\begin{algorithm}[H]
\DontPrintSemicolon

  \KwInput{Two vectors $\mu$ and $\nu$}
  \KwOutput{Configuration with minimum Kantorovich value}
  \KwData{Kantorovich function $K$, initial temperature $\tau_0$, number of MCMC steps $B$}
  Initialize: $\gamma=\mu \otimes \nu$; $\tau=\tau_0$\\
  \For{$b$ in $1:B$}
    {
        Choose an admissible basic move $M$ with $-1$ in $(i_1,j_1)$ and $(i_2,j_2)$\\
        Compute $\alpha=\min\{\gamma(i_1,j_2),\gamma(j_2,i_1)\}$\\
        Generate $u$ uniform in $[0,\alpha]$\\
        Define $\gamma'=\gamma-u\cdot M$ \\
        \If{$\gamma' \ge 0$}
        {Define $p=\min\left\{\exp(-(K(\gamma')-K(\gamma)))^{1/\tau}, 1 \right\}$ \\
         Generate $v$ uniform in $[0,1]$ \\
         \If{$p>v$}
         {
         $\gamma=\gamma'$
         }
        }
        Decrease $\tau$
    }
    \ForEach{$M$ with $i_1<i_2=j_1<j_2$ or $j_1<j_2=i_1<i_2$}
    {
     \If{$\alpha=\min\{\gamma(i_1,j_2)\gamma(i_2,j_1)\}>0$}
     {$\gamma=\gamma-\alpha \cdot M$}
    }
\caption{Simulated Annealing with final optimization step}
\end{algorithm}
\caption{Pseudo-code of the algorithm.} \label{fig:alg}
\end{figure}

To choose the simulation parameters (i.e., the initial temperature $\tau_0$ and the length of the Markov chain $B$), we have performed a preliminary simulation study for values of $n$ ranging from $4$ to $20$. In the set $\{1, \ldots, n\}$, the distance used here is $d(i,j)=\sqrt{|i-j|}$.

In the first part of the simulation study, we have computed the acceptance probability of the first move of the MCMC as a function of the initial temperature $\tau_0$. The results are displayed in Table \ref{tab:ap}. Each value is based on a sample of $10,000$ pairs of marginal probability functions $\mu,\nu$. Each entry of $\mu,\nu$ is chosen under the uniform distribution $\mathcal U [0,1]$, and the two vectors are  then normalized.

\begin{remark}
Our Simulated Annealing implementation has the independence coupling as its starting point. This is because it is a joint probability distribution far from the vertices of the polytope.
\end{remark}

The initial temperature $\tau_0$ can be chosen reasonably small. For instance, if we fix $0.95$ as the acceptance probability of the first move, $\tau_0$ decreases with $n$ and ranges from $10^{-0.6}$ for $n=4$ to $10^{-2.0}$ for $n=20$.

In the second part of the simulation study, we have inspected when the Markov chain does not produce new moves to evaluate the convergence of the algorithm. For values of the number $B$ of the MCMC steps ranging from $10$ to $10^5$, we have computed how many moves would be accepted in a window of $100$ further steps. The simulation is based on $1,000$ pairs of marginal probability functions $\mu, \nu$ in each case, randomly chosen as in the previous part of the study.  The initial temperature for each $n$ has been chosen from the first part of the study, as outlined above. The temperature decrease function used here is $\tau=\tau_0(0.95)^b$, $b=1,\ldots,B$, but similar results are obtained for other choices, namely $\tau=\tau_0(0.99)^b$, $\tau=\tau_0/b$, $\tau=\tau_0/\log(1+b)$.

The proportions of accepted moves  are displayed in Table \ref{tab:end}. We observe that for values of the number $B$ of the MCMC steps ranging from $10^3$ and $10^5$ the acceptance probability of a new move is less than $0.001$.

\begin{table}
\small{
\begin{tabular}{|c|*{6}r r|}
  \hline
& $10^{-2.6}$ & $10^{-2.4}$ & $10^{-2.2}$ & $10^{-2.0}$ & $10^{-1.8}$ & $10^{-1.6}$ & $10^{-1.4}$ \\ \hline
$4$ & $0.5998$ & $0.6406$ & $0.6665$ & $0.7150$ & $0.7587$ & $0.8049$ & $0.8503$ \\
$5$ & $0.6515$ & $0.6850$ & $0.7257$ & $0.7770$ & $0.8138$ & $0.8567$ & $0.8944$  \\
$6$ & $0.6919$ & $0.7236$ & $0.7698$ & $0.8098$ & $0.8585$ & $0.8929$ & $0.9191$ \\
$7$ & $0.7260$ & $0.7621$ & $0.8036$ & $0.8451$ & $0.8822$ & $0.9139$ & $0.9405$  \\
$8$ & $0.7485$ & $0.7879$ & $0.8270$ & $0.8664$ & $0.9021$ & $0.9302$ & $0.9520$ \\
$9$ & $0.7724$ & $0.8103$ & $0.8532$ & $0.8880$ & $0.9142$ & $0.9420$ & $0.9572$  \\
$10$ & $0.7943$ & $0.8357$ & $0.8671$ & $0.9005$ & $0.9263$ & $0.9496$ & $0.9650$  \\
$11$ & $0.8142$ & $0.8516$ & $0.8815$ & $0.9113$ & $0.9371$ & $0.9571$ & $0.9702$  \\
$12$ & $0.8186$ & $0.8648$ & $0.8947$ & $0.9217$ & $0.9430$ & $0.9629$ & $0.9742$  \\
$13$ & $0.8352$ & $0.8702$ & $0.9064$ & $0.9287$ & $0.9520$ & $0.9672$ & $0.9764$  \\
$14$ & $0.8530$ & $0.8847$ & $0.9133$ & $0.9368$ & $0.9564$ & $0.9702$ & $0.9799$ \\
$15$ & $0.8631$ & $0.8931$ & $0.9206$ & $0.9392$ & $0.9606$ & $0.9734$ & $0.9818$  \\
$16$ & $0.8636$ & $0.8991$ & $0.9267$ & $0.9466$ & $0.9627$ & $0.9759$ & $0.9847$  \\
$17$ & $0.8757$ & $0.9082$ & $0.9301$ & $0.9513$ & $0.9674$ & $0.9776$ & $0.9850$  \\
$18$ & $0.8834$ & $0.9126$ & $0.9337$ & $0.9533$ & $0.9702$ & $0.9790$ & $0.9869$ \\
$19$ & $0.8927$ & $0.9173$ & $0.9419$ & $0.9587$ & $0.9710$ & $0.9808$ & $0.9878$  \\
$20$ & $0.8937$ & $0.9195$ & $0.9446$ & $0.9629$ & $0.9746$ & $0.9827$ & $0.9886$  \\ \hline
 \hline
\end{tabular}
}
\medskip

\small{
\begin{tabular}{|c|*{6}r r|}
  \hline
& $10^{-1.2}$ & $10^{-1.0}$ & $10^{-0.8}$ & $10^{-0.6}$ & $10^{-0.4}$ & $10^{-0.2}$ & $10^{0}$ \\ \hline
$4$ & $0.8901$ & $0.9217$ & $0.9437$ & $0.9639$ & $0.9766$ & $0.9844$ & $0.9900$ \\
$5$ & $0.9250$ & $0.9473$ & $0.9671$ & $0.9763$ & $0.9842$ & $0.9903$ & $0.9936$ \\
$6$ & $0.9450$ & $0.9639$ & $0.9751$ & $0.9836$ & $0.9895$ & $0.9934$ & $0.9958$ \\
$7$ & $0.9583$ & $0.9721$ & $0.9817$ & $0.9878$ & $0.9922$ & $0.9950$ & $0.9968$ \\
$8$ & $0.9660$ & $0.9780$ & $0.9856$ & $0.9905$ & $0.9937$ & $0.9963$ & $0.9976$ \\
$9$ & $0.9729$ & $0.9824$ & $0.9883$ & $0.9926$ & $0.9953$ & $0.9970$ & $0.9980$  \\
$10$ & $0.9765$ & $0.9856$ & $0.9899$ & $0.9940$ & $0.9961$ & $0.9975$ & $0.9984$ \\
$11$ & $0.9804$ & $0.9880$ & $0.9920$ & $0.9948$ & $0.9967$ & $0.9979$ & $0.9986$ \\
$12$ & $0.9833$ & $0.9894$ & $0.9932$ & $0.9955$ & $0.9971$ & $0.9982$ & $0.9989$ \\
$13$ & $0.9859$ & $0.9907$ & $0.9943$ & $0.9960$ & $0.9975$ & $0.9984$ & $0.9991$ \\
$14$ & $0.9868$ & $0.9915$ & $0.9945$ & $0.9966$ & $0.9978$ & $0.9986$ & $0.9991$ \\
$15$ & $0.9884$ & $0.9925$ & $0.9950$ & $0.9969$ & $0.9981$ & $0.9988$ & $0.9992$ \\
$16$ & $0.9901$ & $0.9931$ & $0.9961$ & $0.9974$ & $0.9983$ & $0.9989$ & $0.9993$ \\
$17$ & $0.9903$ & $0.9938$ & $0.9963$ & $0.9975$ & $0.9985$ & $0.9991$ & $0.9994$ \\
$18$ & $0.9917$ & $0.9944$ & $0.9964$ & $0.9977$ & $0.9986$ & $0.9991$ & $0.9994$ \\
$19$ & $0.9925$ & $0.9949$ & $0.9968$ & $0.9981$ & $0.9988$ & $0.9992$ & $0.9995$ \\
$20$ & $0.9930$ & $0.9956$ & $0.9971$ & $0.9981$ & $0.9989$ & $0.9992$ & $0.9995$ \\ \hline
 \hline
\end{tabular}
}

\caption{Acceptance probability of the first move of the MCMC for $4 \leq n \leq 20$ and initial temperature $10^{-2.6}\leq \tau_0 \leq 1$.} \label{tab:ap}
\end{table}

\begin{table}
\small{
\begin{tabular}{|c|*{4}r r|}
  \hline
& $10^1$ & $10^2$ & $10^3$ & $10^4$ & $10^5$ \\ \hline
4 & 0.0532 & 0.0079 & 0.0000 & 0.0000 & 0.0000\\
5 & 0.0829 & 0.0235 & 0.0004 & 0.0000 & 0.0000\\
6 & 0.1122 & 0.0382 & 0.0020 & 0.0000 & 0.0000\\
7 & 0.1437 & 0.0553 & 0.0049 & 0.0000 & 0.0000\\
8 & 0.1698 & 0.0730 & 0.0092 & 0.0000 & 0.0000\\
9 & 0.1975 & 0.0869 & 0.0129 & 0.0002 & 0.0000\\
10 & 0.2232 & 0.1054 & 0.0182 & 0.0006 & 0.0000\\
11 & 0.2470 & 0.1235 & 0.0213 & 0.0011 & 0.0000\\
12 & 0.2686 & 0.1403 & 0.0279 & 0.0020 & 0.0000\\
13 & 0.2867 & 0.1577 & 0.0325 & 0.0024 & 0.0000\\
14 & 0.3038 & 0.1754 & 0.0370 & 0.0036 & 0.0000\\
15 & 0.3194 & 0.1947 & 0.0417 & 0.0044 & 0.0000\\
16 & 0.3358 & 0.2055 & 0.0477 & 0.0051 & 0.0001\\
17 & 0.3440 & 0.2242 & 0.0524 & 0.0063 & 0.0001\\
18 & 0.3602 & 0.2406 & 0.0585 & 0.0072 & 0.0002\\
19 & 0.3690 & 0.2481 & 0.0623 & 0.0090 & 0.0004\\
20 & 0.3790 & 0.2664 & 0.0681 & 0.0097 & 0.0007\\
 \hline
\end{tabular}
}
\caption{Proportion of accepted moves after $B$ iterations of the MCMC for $4 \leq n \leq 20$ and $B$ from $10^1$ to $10^5$.} \label{tab:end}
\end{table}

\begin{example}\label{ex:opt_coup}
For $n=10$ consider the two margins $\mu$ and $\nu$:
\begin{eqnarray*}
& (0.0732,0.0976,0.1220,0.1463,0.1707,0.0244,0.0488,0.0732,0.0976,0.1463) , \\
& (0.2059,0.0000,0.0294,0.0882,0.1471,0.1176,0.0588,0.1765,0.0882,0.0882) .
\end{eqnarray*}
Using the distance $d(i,j)=\sqrt{|i-j|}$, and the numerical parameters from the simulation study, the algorithm produces in less than one second the optimal coupling shown in Table \ref{tab:ex_opt}, whose c-cost is $0.4648$.

\begin{table}
\begin{footnotesize}
\begin{equation*}
\begin{pmatrix}
0.0732 & 0.0000 & 0.0000 & 0.0000 & 0.0000 & 0.0000 & 0.0000 & 0.0000 & 0.0000 & 0.0000\\
0.0976 & 0.0000 & 0.0000 & 0.0000 & 0.0000 & 0.0000 & 0.0000 & 0.0000 & 0.0000 & 0.0000\\
0.0352 & 0.0000 & 0.0294 & 0.0000 & 0.0000 & 0.0115 & 0.0100 & 0.0359 & 0.0000 & 0.0000\\
0.0000 & 0.0000 & 0.0000 & 0.0882 & 0.0000 & 0.0581 & 0.0000 & 0.0000 & 0.0000 & 0.0000\\
0.0000 & 0.0000 & 0.0000 & 0.0000 & 0.1471 & 0.0237 & 0.0000 & 0.0000 & 0.0000 & 0.0000\\
0.0000 & 0.0000 & 0.0000 & 0.0000 & 0.0000 & 0.0244 & 0.0000 & 0.0000 & 0.0000 & 0.0000\\
0.0000 & 0.0000 & 0.0000 & 0.0000 & 0.0000 & 0.0000 & 0.0488 & 0.0000 & 0.0000 & 0.0000\\
0.0000 & 0.0000 & 0.0000 & 0.0000 & 0.0000 & 0.0000 & 0.0000 & 0.0732 & 0.0000 & 0.0000\\
0.0000 & 0.0000 & 0.0000 & 0.0000 & 0.0000 & 0.0000 & 0.0000 & 0.0093 & 0.0882 & 0.0000\\
0.0000 & 0.0000 & 0.0000 & 0.0000 & 0.0000 & 0.0000 & 0.0000 & 0.0581 & 0.0000 & 0.0882
\end{pmatrix}
\end{equation*}
\end{footnotesize}
\caption{Optimal coupling of Example \ref{ex:opt_coup} found by the Simulated Annealing.} \label{tab:ex_opt}
\end{table}
\end{example}

\section*{Acknowledgments}
The Authors thank Luigi Montrucchio (Collegio Carlo Alberto) for useful bibliographical suggestions and insightful comments on an early version of this paper.
The Authors are grateful to the referee for his/her helpful suggestions. G. Pistone gratefully acknowledges the support of de Castro Statistics and Collegio Carlo Alberto.


\end{document}